\documentclass[physrev,twocolumn,superscriptaddress,longbibliography,amsmath,amssymb,floatfix]{revtex4-2}

\usepackage{graphicx}
\usepackage{microtype}
\usepackage{hyperref}
\hypersetup{
    colorlinks = true,
    urlcolor   = blue,
    citecolor  = black,
    linkcolor  = black,
}

\newcommand{\dd}{\mathrm{d}}
\newcommand{\ii}{\mathrm{i}}

\newcommand{\avg}[1]{\langle #1 \rangle}
\newcommand{\norm}[1]{\| #1 \|}
\newcommand{\abs}[1]{\lvert #1 \rvert}
\newcommand{\vect}[1]{\boldsymbol{#1}}
\newcommand{\mat}[1]{\mathsf{#1}}

\newtheorem{lemma}{Lemma}
\newtheorem{theorem}{Theorem}
\newtheorem{proposition}{Proposition}
\newtheorem{corollary}{Corollary}

\newenvironment{proof}[1][Proof]{%
  \noindent\textit{#1.}\ }{%
  \hfill$\square$\par\medskip}

\begin{document}

\title{Unitary discretization of the Koopman--von Neumann equation 
       for quantum simulation of fluid and plasma dynamics}

\author{Aleksandar Jemcov}
\thanks{Corresponding author}
\email{ajemcov@nd.edu}
\affiliation{Department of Aerospace and Mechanical Engineering,
  University of Notre Dame, Notre Dame, Indiana 46556, USA}

\author{Scott C. Morris}
\email{smorris1@nd.edu}
\affiliation{Department of Aerospace and Mechanical Engineering,
  University of Notre Dame, Notre Dame, Indiana 46556, USA}

\begin{abstract}
The Koopman--von Neumann (KvN) formulation of spectrally truncated fluid and
plasma dynamics is considered as a potential approach for quantum computation.
The KvN framework embeds the Liouville equation
into a Hilbert space with norm-preserving, unitary evolution. Here we propose a Weyl-ordered KvN generator along with a summation-by-parts discretization which ensures that the resulting operators are exactly unitary as required for quantum computers. The Weyl-ordered KvN generator is derived as the unique
anti-Hermitian operator symmetrization for real velocity fields.
The formulation operates
directly in the physical amplitude space without phase-space doubling, so
the Heisenberg uncertainty principle does not constrain the grid resolution
during evolution. This limitation re-enters only at the measurement stage
on a quantum computer. Exact discrete unitarity is proved as a purely
algebraic identity that holds regardless of grid resolution or stencil
order. To manage boundaries, a split-step Kraus absorbing layer is
introduced via a Stinespring dilation requiring only one ancilla qubit.
Validation on three test cases spanning dissipative and Hamiltonian
regimes (a viscous Navier--Stokes triad, an
incompressible Euler triad, and a Hasegawa--Mima drift-wave triad)
confirms fourth-order convergence and machine-precision unitarity.
\end{abstract}

\maketitle

\section{Introduction}
\label{sec:intro}

Computing the statistical evolution of fluid and plasma flows directly
from their probability densities remains computationally intractable for
all but the simplest physical problems. Individual flow realizations follow, for example,
the Navier--Stokes or magnetohydrodynamic equations. However, when a flow
has many interacting degrees of freedom, the probability density function
(PDF) represents the preferred physical observable. Metrics
including statistical moments, energy spectra, and transport coefficients,
are all derived from this PDF. A single simulation run provides only one
sample from the underlying distribution. The Liouville equation serves as
the mathematical foundation for the exact evolution of this
density~\cite{Lasota1994}. Hopf~\cite{Hopf1952} derived an evolution
equation governing the characteristic functional of the velocity field,
but its infinite-dimensional nature has severely limited computational
applications. Contemporary approaches face well-known limitations. Moment
closure~\cite{Pope2000} introduces modeling errors. Monte Carlo (MC)
simulation~\cite{Fishman1996} converges at a rate of $1/\sqrt{N_s}$ for
$N_s$ samples. Direct PDF methods~\cite{Pope1985,Haworth2010} suffer from
exponential scaling of computational cost with dimension.

The Koopman--von Neumann (KvN) formulation~\cite{Koopman1931,vonNeumann1932,Mauro2002}
offers an alternative approach that offers the potential for implementation on quantum computing architectures. This approach embeds classical probability
densities into a Hilbert space with norm-preserving, unitary evolution,
yielding a linear representation of nonlinear dynamics that exactly
preserves probability. Joseph~\cite{Joseph2020} identified energy-conserving,
skew-symmetric finite differences~\cite{Morinishi1998,Morinishi2010} as the
appropriate discretization class for maintaining unitarity, and established
the theoretical quantum advantage. The $M^N$ phase-space amplitudes compress
into $N\lceil\log_2 M\rceil$ qubits, where $N$ is the number of retained
modes and $M$ is the number of grid points per modal dimension. This represents an
exponential reduction in state representation relative to classical
Eulerian discretization. Amplitude estimation~\cite{Brassard2002} then
provides a quadratic speedup for distributional queries when the KvN
operator is sparse. Joseph's formulation operates in a doubled
$(2N)$-dimensional phase space with explicit conjugate momenta. The
doubling is what makes the generator Hermitian and suitable for standard
Hamiltonian simulation, but it introduces Heisenberg uncertainty constraints
on joint phase-space resolution and leaves the handling of dissipation
(where unitarity breaks down) as an open problem.

Subsequent approaches address the non-unitarity that arises in
dissipative or open dynamics. Novikau and Joseph~\cite{Novikau2025}
applied the Linear Combination of Hamiltonian Simulations (LCHS)
technique to quantum simulation of both advection-diffusion and nonlinear
classical dynamics modeled via the KvN formulation. Their approach
decomposes the non-Hermitian propagator into a weighted sum of Hamiltonian
evolutions implemented through quantum signal processing, using upwind
discretizations that sacrifice Hermiticity to suppress numerical
oscillations; unitarity is recovered through the LCHS decomposition at
the cost of an additional register. A companion paper~\cite{Novikau2025b}
provides an efficient explicit implementation of this framework for linear
dissipative differential equations.
Jin, Liu, and Yu~\cite{Jin2023,Jin2024} proposed Schr\"{o}dingerization,
which lifts any $D$-dimensional non-unitary linear partial differential
equation (PDE) into a
$(D\!+\!1)$-dimensional Schr\"{o}dinger equation via a warped phase-space
transformation. The lifting requires an extra continuous dimension
discretized on $O(\log(1/\varepsilon))$ additional qubits. Other notable
approaches include Carleman linearization~\cite{Liu2021}, which truncates
an infinite-dimensional embedding, and quantum-walk and Koopman-operator
methods~\cite{Engel2019,Giannakis2022} that embed the classical dynamics
in quantum registers through different operator constructions but share
the limitation of low-degree polynomial dynamics.
Both LCHS and Schr\"{o}dingerization accept the non-unitarity introduced by
standard discretizations and compensate by adding auxiliary degrees of
freedom.

The present paper addresses several specific gaps left open by previous approaches.
For the real-valued PDEs discussed, we apply a modal decomposition of the physical domain variables to the phase space where the evolution of the amplitudes follows the Liouville equation. A new KvN generator is derived through operator promotion in a complex Hilbert space over the modal amplitude domain. We show that the Weyl-ordered structure is anti-Hermitian and is consistent with the Liouville equation. Second, phase-space discretization is accomplished using a Summation By Parts (SBP) operator that carries the anti-Hermitian property by algebraic identity. Hence, grid design and accuracy are determined only by classical resolution considerations.  Third, the problem of handling the phase space boundaries is addressed using a split-step Kraus absorbing layer. This  treats finite-domain truncation as a completely positive trace-preserving (CPTP) quantum channel, requiring only a single ancilla qubit via Stinespring dilation.

The formulation was validated on three spectrally truncated equations
spanning dissipative and Hamiltonian regimes: a viscous Navier--Stokes (NS)
triad~\cite{Waleffe1992,Holmes2012}, an
incompressible Euler triad~\cite{Craik1985,Waleffe1992}, and a
Hasegawa--Mima (HM) drift-wave triad~\cite{Hasegawa1978}. These three cases
span dissipative ($\nu > 0$) and Hamiltonian ($\nu = 0$) dynamics, and
contrast fluid ($1/|\vect{k}|^2$ Poisson inversion) with plasma
($1/(|\vect{k}|^2+1)$ polarization-modified inversion) coupling structure.

The remainder of this paper is organized as follows.
Section~\ref{sec:theory} derives the KvN generator
(Sec.~\ref{sec:kvn}), establishes exact discrete unitarity
(Theorem~\ref{thm:discrete_unitarity}, Sec.~\ref{sec:sbp}),
constructs the Kraus absorbing layer (Sec.~\ref{sec:kraus}), and proves
phase-space confinement bounds (Proposition~\ref{prop:norm_bound},
Sec.~\ref{sec:confinement}).
Section~\ref{sec:models} describes the three test cases and
identifies the coupling structure and conservation laws shared by all
three through their resonant triadic structure.
Section~\ref{sec:results} presents the numerical validation, including
grid convergence, domain sensitivity, and PDF evolution.
Section~\ref{sec:quantum} examines the implications for quantum
implementation.
Section~\ref{sec:conclusions} summarizes the findings.

\section{Theoretical formulation}
\label{sec:theory}

\subsection{Liouville--Koopman--von Neumann background}
\label{sec:kvn}

Koopman~\cite{Koopman1931} and von Neumann~\cite{vonNeumann1932}
showed that classical statistical mechanics can be formulated on Hilbert
space in a manner formally analogous to quantum mechanics. The classical
Liouville equation~\cite{Lasota1994}, which expresses the conservation
of probability on phase space, can be recast as an equivalent
Schr\"{o}dinger equation on the Hilbert space of square-integrable
wavefunctions on the modal amplitude domain~\cite{Mauro2002}. The formulation and notation used below follow this standard framework.

Consider a partial differential equation governing a fluid or plasma
flow, such as the incompressible Navier--Stokes, Euler, or
Hasegawa--Mima equation of the test cases in Sec.~\ref{sec:models}.
Galerkin projection onto a finite set of $N$ spatial basis functions,
typically Fourier modes fixed by the geometry and boundary conditions
of the parent PDE (see Appendix~\ref{app:parent_pdes}), reduces the
infinite-dimensional field equation to a coupled system of ordinary
differential equations (ODEs) for the modal amplitudes
$\vect{a}(t) = (a_1(t), \ldots, a_N(t)) \in \mathbb{R}^N$,
\begin{equation}
\dot{\vect{a}} = \vect{f}(\vect{a}),
\label{eq:ode}
\end{equation}
where $\vect{f}: \mathbb{R}^N \to \mathbb{R}^N$ encodes both the
nonlinear mode coupling inherited from the advection term and the
linear damping from viscous or collisional dissipation. For the
quadratic advection nonlinearities considered here,
$f_j = -\gamma_j a_j + \sum_{m,n} C_{jmn}\, a_m a_n$, with
$\gamma_j = \nu |\vect{k}_j|^2$ the viscous damping rate, $\nu$ the
kinematic viscosity, $\vect{k}_j$ the wavevector of the $j$th basis
mode, and $C_{jmn}$ the interaction coefficients set by the triad
geometry. The $N$-mode system~(\ref{eq:ode}) is obtained by projection alone. That is, no
moment closure is imposed, and the quadratic mode coupling is retained
exactly within the truncation. Explicit derivations of
$\vect{f}(\vect{a})$ for each test case are given in
Appendix~\ref{app:parent_pdes}.

A single trajectory of~(\ref{eq:ode}) is insufficient for the
statistical quantities of interest in turbulent and plasma flows.
Promoting $\vect{a}$ to a random variable distributed according to a
probability density $\rho(\vect{a},t)$ and requiring probability to be
conserved along the flow generated by $\vect{f}$ leads to the Liouville
equation~\cite{Lasota1994,Pope2000}, which is the continuity equation
for $\rho$ advected by the velocity field $\vect{f}$,
\begin{equation}
\frac{\partial \rho}{\partial t} + \nabla \cdot (\vect{f}\rho)
= \frac{\partial \rho}{\partial t} + \partial_j (f_j \rho) = 0,
\label{eq:liouville}
\end{equation}
where $\partial_j = \partial / \partial a_j$ and summation over repeated
$j$ is implied. The Liouville equation is linear in $\rho$ regardless
of whether $\vect{f}$ is linear, so the statistical evolution is
governed by a linear operator even when the underlying dynamics are
nonlinear. Applying
the Leibniz rule to the divergence,
\begin{equation}
\partial_j(f_j \rho) = f_j\, \partial_j \rho + (\partial_j f_j)\,\rho,
\label{eq:divergence_expansion}
\end{equation}
separates transport of $\rho$ along the flow from amplitude modulation
by the phase-space compressibility $\partial_j f_j$. The compressibility is the local rate at which
phase-space volumes contract or expand under the flow. It is negative
for viscous systems and identically zero for Hamiltonian systems. The
splitting~(\ref{eq:divergence_expansion}) is the structural feature
that couples the KvN generator to operator ordering in
Sec.~\ref{sec:weyl}, where the two terms correspond to the two non-commuting
orderings of the operators introduced below.

The Koopman--von Neumann formulation~\cite{Koopman1931,vonNeumann1932}
represents the statistical state as a complex
wavefunction $\psi(\vect{a},t)$ with
$\rho(\vect{a},t) = |\psi(\vect{a},t)|^2$. The wavefunction is a
probability amplitude. Its squared modulus is the classical probability
density, and $\psi$ itself belongs to the complex Hilbert space
$L^2(\Omega_a, \mathbb{C})$ of square-integrable functions on the modal
amplitude domain $\Omega_a \subset \mathbb{R}^N$, with the Euclidean
space $\mathbb{R}^N$ of modal amplitude vectors as the domain and
$\mathbb{C}$ signifying that $\psi$ takes complex values. The inner
product on $L^2(\Omega_a, \mathbb{C})$ is defined in the physics
convention, $\langle\varphi|\psi\rangle = \int_{\Omega_a}
\varphi^*(\vect{a})\psi(\vect{a})\,\dd\vect{a}$, conjugate-linear in
the first argument. Two norms will be used in what follows: the $L^2$
norm of the wavefunction,
$\norm{\psi} = (\int_{\Omega_a} |\psi|^2 \,\dd\vect{a})^{1/2}$, equal
to the total probability, and the Euclidean norm on $\mathbb{R}^N$, and 
$\norm{\vect{a}}_2 = (\sum_j a_j^2)^{1/2}$, equal to the length of a
modal state vector. The analytical advantage of the wavefunction
representation is that inner products, adjoints, and unitary evolution
become available on $L^2(\Omega_a, \mathbb{C})$; the implications for
a quantum-register implementation are addressed in
Sec.~\ref{sec:quantum}.

On this Hilbert space, the classical quantities of the Liouville
equation are promoted to operators. Multiplication by the component
$f_j(\vect{a})$ of the velocity field defines the operator
$\hat{f}_j : \psi \mapsto f_j(\vect{a})\,\psi$, and partial
differentiation with respect to $a_j$ defines
$\hat{\partial}_j : \psi \mapsto \partial_j \psi$. Both are unbounded
operators on $L^2(\Omega_a, \mathbb{C})$, with domains of smooth,
suitably decaying functions so that the integration-by-parts arguments
below are well defined. Their Hermitian adjoints follow from the
inner-product convention above. For the multiplication operator,
\begin{equation}
\langle\varphi | \hat{f}_j\,\psi\rangle
= \int_{\Omega_a} \varphi^* f_j\, \psi\,\dd\vect{a}
= \langle f_j^*\varphi | \psi\rangle,
\label{eq:mult_adjoint_main}
\end{equation}
so $\hat{f}_j^\dagger = \hat{f}_j^*$, and when $f_j$ is real-valued
$\hat{f}_j$ is self-adjoint. For the differentiation operator,
integration by parts with boundary contributions suppressed by the
decay of $\psi$ gives
\begin{multline}
\langle\varphi | \hat{\partial}_j\,\psi\rangle
= \int_{\Omega_a} \varphi^*\,\partial_j\psi\,\dd\vect{a} \\
= -\int_{\Omega_a} (\partial_j\varphi)^*\,\psi\,\dd\vect{a}
= -\langle\hat{\partial}_j\varphi | \psi\rangle,
\label{eq:diff_adjoint_main}
\end{multline}
so $\hat{\partial}_j^\dagger = -\hat{\partial}_j$, i.e.\ $\hat{\partial}_j$
is anti-self-adjoint. 
The operators
$\hat{f}_j$ and $\hat{\partial}_j$ do not commute, because $\partial_j$
acts on $f_j$ as well as on $\psi$; the precise form of the commutator
is needed in Sec.~\ref{sec:weyl} and is derived there. In terms of
these operators, the Liouville equation~(\ref{eq:liouville}) reads
$\partial_t\rho = -\hat{\partial}_j(\hat{f}_j\rho)$, and the
splitting~(\ref{eq:divergence_expansion}) is the statement that the
operator product $\hat{\partial}_j \hat{f}_j$ resolves into
$\hat{f}_j \hat{\partial}_j$ plus a correction proportional to
$\partial_j f_j$.

The wavefunction evolution is sought in the first-order linear form
\begin{equation}
\frac{\partial \psi}{\partial t} = \hat{L}\, \psi,
\label{eq:schrodinger}
\end{equation}
with $\hat{L}$ a linear operator on $L^2(\Omega_a, \mathbb{C})$
constructed from $\hat{f}_j$ and $\hat{\partial}_j$. The construction
is required to satisfy two conditions. First, the density $\rho =
|\psi|^2$ must evolve under~(\ref{eq:schrodinger}) in accordance with
the Liouville equation~(\ref{eq:liouville}). Computing
$\partial_t|\psi|^2 = (\hat{L}\psi)^*\psi + \psi^*\hat{L}\psi$ and
matching to $-\partial_j(f_j\rho)$ using the
splitting~(\ref{eq:divergence_expansion}) fixes $\hat{L}$ up to an
operator-ordering ambiguity that will be resolved in
Sec.~\ref{sec:weyl}. Second, probability must be conserved,
$\dd / \dd t \int \rho\,\dd\vect{a} = \dd / \dd t \norm{\psi}^2 = 0$.
Differentiating $\norm{\psi}^2 = \langle \psi | \psi \rangle$
under~(\ref{eq:schrodinger}) and using the adjoint identity
$\langle \hat{L}\psi | \psi \rangle = \langle \psi | \hat{L}^\dagger
\psi \rangle$ gives
\begin{equation}
\frac{\dd}{\dd t} \norm{\psi}^2
= \langle \psi | \hat{L}^\dagger \psi \rangle
+ \langle \psi | \hat{L}\psi \rangle
= \langle \psi | (\hat{L}^\dagger + \hat{L}) \psi \rangle,
\label{eq:norm_evolution}
\end{equation}
which vanishes for every $\psi$ if and only if the generator is
anti-Hermitian,
\begin{equation}
\hat{L}^\dagger = -\hat{L}.
\label{eq:antiherm_requirement}
\end{equation}
Anti-Hermiticity is thus a direct algebraic restatement of probability
conservation rather than an independent requirement. The KvN generator
plays the same role that $-\ii\hat{H}$ plays in the Schr\"{o}dinger
equation, with the sign flip reflecting the absence of the
conventional $\ii$ factor on the left-hand side
of~(\ref{eq:schrodinger}). Natural units ($\hbar = 1$) are adopted
throughout.

\subsection{Weyl ordering and the anti-Hermitian generator}
\label{sec:weyl}

Constructing an $\hat{L}$ that satisfies
Eq.~(\ref{eq:antiherm_requirement}) requires resolving an operator
ordering ambiguity. In the Hamiltonian case originally considered by
Koopman and von Neumann~\cite{Koopman1931,vonNeumann1932}, the
phase-space flow is divergence-free and the ordering is immaterial
because the two candidates differ by a term that vanishes identically.
For the compressible phase-space flows needed here,
$(\nabla\cdot\vect{f}\neq 0)$, the two
orderings are different and must be specified.
Weyl's symmetric prescription~\cite{Weyl1927} is the classical
resolution of such ambiguities in quantization; its use in KvN-type
discretizations has also been advocated on discrete
energy-conservation grounds~\cite{Joseph2020,Morinishi1998,Morinishi2010}.
The derivation below reaches the same prescription from a different
starting point, as a direct consequence of anti-Hermiticity on
$L^2(\Omega_a, \mathbb{C})$.

Applying the Leibniz rule to the operators $\hat{f}_j$ and
$\hat{\partial}_j$ introduced in Sec.~\ref{sec:kvn} gives
\begin{equation}
[\hat{\partial}_j,\, \hat{f}_j]\,\psi
\equiv \hat{\partial}_j(\hat{f}_j\psi) -
\hat{f}_j(\hat{\partial}_j\psi) = (\partial_j f_j)\,\psi,
\label{eq:commutator}
\end{equation}
where $[\hat{A}, \hat{B}] \equiv \hat{A}\hat{B} - \hat{B}\hat{A}$.
The two orderings $\hat{f}_j\hat{\partial}_j$ and
$\hat{\partial}_j\hat{f}_j$ therefore differ by the $j$th component of
the phase-space compressibility $\partial_j f_j$, which is the
operator form of the splitting~(\ref{eq:divergence_expansion}). Neither
ordering alone is anti-Hermitian. Using $\hat{f}_j^\dagger = \hat{f}_j$
(real $f_j$) and $\hat{\partial}_j^\dagger = -\hat{\partial}_j$, product
adjoints give $(\hat{f}_j\hat{\partial}_j)^\dagger = -\hat{\partial}_j
\hat{f}_j \neq -\hat{f}_j\hat{\partial}_j$, and symmetrically for the
opposite ordering, so the asymmetric combinations are ruled out.

Consider the one-parameter family of symmetric combinations
\begin{equation}
\hat{L}_\alpha = -\sum_{j=1}^{N}\bigl[\alpha\,\hat{f}_j\hat{\partial}_j
+ (1-\alpha)\,\hat{\partial}_j\hat{f}_j\bigr].
\label{eq:one_param_family_main}
\end{equation}
Its adjoint is
\begin{equation}
\hat{L}_\alpha^\dagger = \sum_{j=1}^{N}\bigl[\alpha\,\hat{\partial}_j
\hat{f}_j + (1-\alpha)\,\hat{f}_j\hat{\partial}_j\bigr].
\label{eq:one_param_adjoint}
\end{equation}
The anti-Hermiticity requirement $\hat{L}_\alpha^\dagger = -\hat{L}_\alpha$
therefore reads
\begin{equation}
\alpha\,\hat{\partial}_j\hat{f}_j + (1-\alpha)\,\hat{f}_j\hat{\partial}_j
= \alpha\,\hat{f}_j\hat{\partial}_j + (1-\alpha)\,\hat{\partial}_j\hat{f}_j,
\label{eq:antiherm_condition_main}
\end{equation}
with summation over $j$ implied. Substituting $\hat{\partial}_j\hat{f}_j
= \hat{f}_j\hat{\partial}_j + (\partial_j f_j)$
from~(\ref{eq:commutator}), the $\hat{f}_j\hat{\partial}_j$
contributions cancel on both sides and the condition collapses to the
scalar statement
\begin{equation}
(2\alpha - 1)\,\nabla\!\cdot\!\vect{f} = 0.
\label{eq:alpha_condition}
\end{equation}
The unique solution is $\alpha =
1/2$ for vector fields with nonzero divergence.
The anti-Hermitian ordering in the compressible case is therefore the
symmetric Weyl~\cite{Weyl1927} combination:
\begin{equation}
\hat{L} = -\tfrac{1}{2}\sum_{j=1}^{N} \bigl(\hat{f}_j \hat{\partial}_j +
\hat{\partial}_j \hat{f}_j\bigr).
\label{eq:kvn_generator}
\end{equation}

\begin{lemma}[Anti-Hermiticity]
\label{lem:antiherm}
The generator $\hat{L}$ satisfies $\hat{L}^\dagger = -\hat{L}$ if and
only if every $f_j(\vect{a})$ is real-valued.
\end{lemma}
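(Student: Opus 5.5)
Both directions will use the adjoint rules already derived in Sec.~\ref{sec:kvn}, but now without assuming $f_j$ real: Eq.~(\ref{eq:mult_adjoint_main}) gives $\hat{f}_j^\dagger = \hat{f}_j^*$ (multiplication by $f_j^*$), and Eq.~(\ref{eq:diff_adjoint_main}) gives $\hat{\partial}_j^\dagger = -\hat{\partial}_j$. Using $(\hat{A}\hat{B})^\dagger = \hat{B}^\dagger\hat{A}^\dagger$ on the common domain of smooth decaying functions, I will compute from~(\ref{eq:kvn_generator})
\begin{equation}
\hat{L}^\dagger = \tfrac{1}{2}\sum_{j=1}^{N}\bigl(\hat{\partial}_j\hat{f}_j^* + \hat{f}_j^*\hat{\partial}_j\bigr).
\end{equation}
Hence
\begin{equation}
\hat{L}^\dagger + \hat{L} = \tfrac{1}{2}\sum_{j=1}^{N}\bigl[\hat{\partial}_j(\hat{f}_j^*-\hat{f}_j) + (\hat{f}_j^*-\hat{f}_j)\hat{\partial}_j\bigr].
\end{equation}
Writing $g_j = f_j^* - f_j = -2\ii\,\mathrm{Im}\,f_j$, the commutator identity~(\ref{eq:commutator}), applied with $g_j$ in place of $f_j$, will give the explicit first-order operator
\begin{equation}
(\hat{L}^\dagger + \hat{L})\psi = \sum_j g_j\,\partial_j\psi + \tfrac{1}{2}(\partial_j g_j)\,\psi.
\end{equation}

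\emph{Sufficiency.} If every $f_j$ is real, then $g_j \equiv 0$, so the right-hand side vanishes and $\hat{L}^\dagger = -\hat{L}$. This is just the $\alpha = 1/2$ case of~(\ref{eq:one_param_adjoint}).

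\emph{Necessity.} Suppose $\hat{L}^\dagger + \hat{L} = 0$ as an operator on the dense domain. Then the differential operator $\sum_j g_j\partial_j + \tfrac{1}{2}\nabla\!\cdot\!\vect{g}$ annihilates every test function $\psi$.
\begin{itemize}
\item First, I apply it to a $\psi$ that equals $1$ on a neighborhood of a chosen point $\vect{a}_0$. There the derivative terms drop out, giving $\nabla\!\cdot\!\vect{g}(\vect{a}_0) = 0$.
\item Next, I apply it to a $\psi$ that equals $a_k - a_{0,k}$ near $\vect{a}_0$. This gives $g_k(\vect{a}_0) = 0$.
\end{itemize}
Since $\vect{a}_0$ and $k$ are arbitrary, $\mathrm{Im}\,f_j \equiv 0$ for every $j$, i.e., each $f_j$ is real-valued.

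The main obstacle is making the "if and only if" rigorous at the operator level. Because the operators are unbounded, the adjoint identities hold only on the stated domain of smooth, suitably decaying functions. I will therefore interpret $\hat{L}^\dagger = -\hat{L}$ as the identity $\langle\varphi|\hat{L}\psi\rangle = -\langle\hat{L}\varphi|\psi\rangle$ for all $\varphi,\psi$ in that domain, and I must ensure the localized test functions above belong to it. Compactly supported smooth bump functions times $1$ or $(a_k - a_{0,k})$ do, so I expect this to go through provided $f_j$ is $C^1$. I will state that regularity assumption explicitly, consistent with the quadratic polynomial $f_j$ used in the paper.
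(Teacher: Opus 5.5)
Your proposal is correct and follows the paper's route: compute $\hat{L}^\dagger = \tfrac12\sum_j(\hat{f}_j^*\hat{\partial}_j + \hat{\partial}_j\hat{f}_j^*)$ term by term from the two adjoint rules, then compare with $-\hat{L}$. The paper simply asserts the final ``if and only if''. You make the necessity direction explicit by reducing $\hat{L}^\dagger + \hat{L}$ to the first-order operator $\sum_j g_j\partial_j + \tfrac12\nabla\cdot\vect{g}$ and testing it on localized functions; your first test function is not needed, because the second one already vanishes at $\vect{a}_0$ and so isolates $g_k(\vect{a}_0)$ directly.
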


\begin{proof}
Computing the adjoint term by term, using $\hat{f}_j^\dagger = \hat{f}_j^*$
(Eq.~(\ref{eq:mult_adjoint_main})) and $\hat{\partial}_j^\dagger = -\hat{\partial}_j$
(Eq.~(\ref{eq:diff_adjoint_main})) from Sec.~\ref{sec:kvn}, gives
$\hat{L}^\dagger = \tfrac{1}{2}\sum_j (\hat{f}_j^* \hat{\partial}_j
+ \hat{\partial}_j \hat{f}_j^*)$. The adjoint equals $-\hat{L}$ if and only
if $f_j^* = f_j$ for all $j$.
\end{proof}

The real-valued condition is a direct consequence of the underlying
physics. Galerkin projection of any real PDE onto real basis functions
produces real coupling coefficients $C_{jmn}$ and real damping rates
$\gamma_j$, so $\vect{f}: \mathbb{R}^N \to \mathbb{R}^N$ strictly.

Expanding Eq.~(\ref{eq:kvn_generator}) using the commutator
relation~(\ref{eq:commutator}) yields the differential-operator form
of the generator:
\begin{align}
\hat{L}\psi &= -\sum_j f_j\,\partial_j\psi
              - \tfrac{1}{2}\Bigl(\sum_j \partial_j f_j\Bigr)\psi \nonumber\\
            &= -\vect{f}\cdot\nabla\psi - \tfrac{1}{2}(\nabla\cdot\vect{f})\,\psi.
\label{eq:kvn_explicit}
\end{align}
The divergence term $-\tfrac{1}{2}(\nabla \cdot \vect{f})$ emerges
from the Leibniz rule applied to the $\hat{\partial}_j\hat{f}_j$ term
in the Weyl-ordered generator. For
divergence-free flows the generator reduces to pure advection and the
term vanishes. For dissipative equations it provides the
amplitude-modulation term required for exact recovery of
Eq.~(\ref{eq:liouville}). Verifying this explicitly, $\rho = |\psi|^2
= \psi^*\psi$ and $\partial_t\psi = \hat{L}\psi$ together give
$\partial_t\rho = (\hat{L}\psi)^*\psi + \psi^*(\hat{L}\psi)$;
substituting~(\ref{eq:kvn_explicit}) for both terms yields
\begin{align}
\partial_t\rho
&= \bigl[-\vect{f}\cdot\nabla\psi^*
   - \tfrac{1}{2}(\nabla\cdot\vect{f})\psi^*\bigr]\psi \nonumber\\
&\quad + \psi^*\bigl[-\vect{f}\cdot\nabla\psi
   - \tfrac{1}{2}(\nabla\cdot\vect{f})\psi\bigr] \nonumber\\
&= -\vect{f}\cdot\nabla\rho - (\nabla\cdot\vect{f})\,\rho \nonumber\\
&= -\nabla\cdot(\vect{f}\,\rho),
\label{eq:consistency}
\end{align}
where $\psi\nabla\psi^* + \psi^*\nabla\psi = \nabla\rho$ collects the
gradient terms and the two half-divergence terms combine to the full
$(\nabla\cdot\vect{f})\,\rho$ required by~(\ref{eq:divergence_expansion}).
The Liouville equation~(\ref{eq:liouville}) is recovered exactly, which
confirms the first of the two conditions imposed in Sec.~\ref{sec:kvn}.

Because $\hat{L}$ has real coefficients and the initial wavefunction
satisfies $\psi(\vect{a},0) = \sqrt{\rho(\vect{a},0)} \geq 0$, the
wavefunction remains real for all $t > 0$. The complex Hilbert space
structure is nonetheless required for the formal anti-Hermiticity proof
and for the quantum mapping discussed in Sec.~\ref{sec:quantum}.

\subsection{Exact discrete unitarity}
\label{sec:sbp}

The continuous generator $\hat{L}$ acts on $L^2(\Omega_a, \mathbb{C})$.
For computation, the generator is discretized on a uniform grid with $M$ points per
coordinate direction and spacing $\Delta a = L_a/M$, where $L_a$ is the
extent of $\Omega_a$ in each coordinate direction. The wavefunction is represented by $K = M^N$ complex
amplitudes $\psi_{\vect{i}}$, indexed by the multi-index
$\vect{i} = (i_1, \ldots, i_N)$ with components
$i_k \in \{0, 1, \ldots, M-1\}$ for $k = 1, \ldots, N$. The inner product becomes
$\langle\varphi|\psi\rangle \approx \sum_{\vect{i}} \varphi_{\vect{i}}^*
\psi_{\vect{i}}\,\Delta V$ for any $\varphi, \psi \in L^2(\Omega_a,
\mathbb{C})$, where $\Delta V = (\Delta a)^N$.

Two matrices are needed for each coordinate direction $j$. The
\emph{differentiation matrix} $\mat{D}_j$ approximates
$\partial/\partial a_j$ on the discrete grid through the Kronecker
product
\begin{equation}
\mat{D}_j = \mat{I} \otimes \cdots \otimes \mat{D} \otimes \cdots
\otimes \mat{I},
\label{eq:Dj_kronecker}
\end{equation}
where the one-dimensional, fourth order of accuracy operator $\mat{D}^{(4)}$
(Appendix~\ref{app:sbp4}) appears in the $j$th position and $\mat{I}$
is the $M \times M$ identity in all other directions. For the
anti-Hermiticity of the continuous operator to carry over, $\mat{D}$
must be exactly skew-symmetric: $\mat{D}^{\mathrm{T}} = -\mat{D}$. This
requirement is fulfilled by periodic Summation-By-Parts (SBP)
operators~\cite{Kreiss1974,Strand1994,Svard2014}
(Appendix~\ref{app:sbp}), whose exact skew-symmetry is proved for the
specific operators used here in Appendix~\ref{app:sbp_skewsym}:
\begin{equation}
\mat{D}_j^\dagger = \mat{D}_j^{\mathrm{T}} = -\mat{D}_j
\quad (\text{anti-Hermitian}).
\label{eq:D_antiherm}
\end{equation}
The \emph{velocity-field matrix}
$\mat{F}_j = \mathrm{diag}(f_j(\vect{a}_{\vect{i}}))$ samples the
modal velocity field at every grid point. Since $f_j$ is always
real-valued (Lemma~\ref{lem:antiherm}), $\mat{F}_j$ is real diagonal
and self-adjoint:
\begin{equation}
\mat{F}_j^\dagger = \mat{F}_j^{\mathrm{T}} = \mat{F}_j
\quad (\text{Hermitian}).
\label{eq:F_herm}
\end{equation}
The Weyl-ordered discrete generator is
\begin{equation}
\mat{L} = -\frac{1}{2} \sum_{j=1}^N \left(
  \mat{F}_j \mat{D}_{j} + \mat{D}_{j} \mat{F}_j \right).
\label{eq:kvn_discrete}
\end{equation}

\begin{theorem}[Exact discrete unitarity]
\label{thm:discrete_unitarity}
Let $\mat{D}_j$ be a periodic SBP first-derivative operator of any
stencil order on a uniform grid, satisfying
$\mat{D}_j^{\mathrm{T}} = -\mat{D}_j$ (proved for the explicit operators
used here in Appendix~\ref{app:sbp_skewsym}), and let
$\mat{F}_j = \mathrm{diag}(f_j(\vect{a}_{\vect{i}}))$ be a real
diagonal matrix of velocity-field values, satisfying
$\mat{F}_j^{\mathrm{T}} = \mat{F}_j$. Then the Weyl-ordered discrete
generator~(\ref{eq:kvn_discrete}) satisfies
$\mat{L} + \mat{L}^{\mathrm{T}} = 0$ exactly, and the propagator
$\mat{U} = \exp(\mat{L}\,\Delta t)$ is exactly unitary,
$\mat{U}^{\mathrm{T}}\mat{U} = \mat{I}$, for any $\Delta t$, any
velocity field $\vect{f}$, any divergence structure
$\nabla\!\cdot\!\vect{f}$, and any grid resolution $M$.
\end{theorem}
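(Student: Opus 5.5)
The proof is purely algebraic and splits into two steps: first show $\mat{L}^{\mathrm{T}} = -\mat{L}$ from the symmetry properties of the factors, then deduce orthogonality of the matrix exponential.

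\emph{Skew-symmetry of $\mat{L}$.} The first step is to record that the Kronecker-lifted operator $\mat{D}_j$ of Eq.~(\ref{eq:Dj_kronecker}) inherits skew-symmetry from the one-dimensional $\mat{D}$. This uses $(\mat{A}\otimes\mat{B})^{\mathrm{T}} = \mat{A}^{\mathrm{T}}\otimes\mat{B}^{\mathrm{T}}$ and $\mat{I}^{\mathrm{T}}=\mat{I}$, which give $\mat{D}_j^{\mathrm{T}} = \mat{I}\otimes\cdots\otimes\mat{D}^{\mathrm{T}}\otimes\cdots\otimes\mat{I} = -\mat{D}_j$. This is the content of Eq.~(\ref{eq:D_antiherm}), with the one-dimensional skew-symmetry taken from Appendix~\ref{app:sbp_skewsym} as a hypothesis. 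For each $j$, the transpose-of-product rule together with $\mat{F}_j^{\mathrm{T}}=\mat{F}_j$ then gives
\begin{equation*}
(\mat{F}_j\mat{D}_j + \mat{D}_j\mat{F}_j)^{\mathrm{T}}
= \mat{D}_j^{\mathrm{T}}\mat{F}_j^{\mathrm{T}} + \mat{F}_j^{\mathrm{T}}\mat{D}_j^{\mathrm{T}}
= -\mat{D}_j\mat{F}_j - \mat{F}_j\mat{D}_j .
\end{equation*}
So each symmetrized summand is skew-symmetric. Summing over $j$ and multiplying by $-\tfrac12$ yields $\mat{L}+\mat{L}^{\mathrm{T}}=0$.

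This is the discrete counterpart of the calculation in Lemma~\ref{lem:antiherm}, with one difference worth stressing. In the continuum the cross terms were handled through the commutator~(\ref{eq:commutator}) and the divergence term. Here no commutator identity is invoked, because the Weyl ordering makes the transpose simply swap the two orderings. Consequently nothing depends on $\nabla\cdot\vect{f}$, on the stencil order, or on $M$.

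\emph{Orthogonality of the propagator.} Since $\mat{L}$ is real, $\mat{L}^\dagger=\mat{L}^{\mathrm{T}}$. Termwise transposition of the absolutely convergent exponential series gives $\exp(\mat{L}\Delta t)^{\mathrm{T}} = \exp(\mat{L}^{\mathrm{T}}\Delta t) = \exp(-\mat{L}\Delta t)$. Because $\mat{L}$ commutes with $-\mat{L}$, we have $\exp(-\mat{L}\Delta t)\exp(\mat{L}\Delta t)=\exp(0)=\mat{I}$, and hence $\mat{U}^{\mathrm{T}}\mat{U}=\mat{I}$ for every real $\Delta t$. As a cross-check, one can instead differentiate $\mat{U}(t)^{\mathrm{T}}\mat{U}(t)$ in $t$: the derivative is $\mat{U}^{\mathrm{T}}(\mat{L}^{\mathrm{T}}+\mat{L})\mat{U}=0$, and the product equals $\mat{I}$ at $t=0$. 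This mirrors the norm identity~(\ref{eq:norm_evolution}).

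\emph{Where the difficulty lies.} There is no real obstacle; the only point that needs care is the hypothesis on $\mat{D}$. Exact skew-symmetry holds for periodic SBP operators because their norm matrix is a multiple of the identity, so the SBP relation reduces to $\mat{D}+\mat{D}^{\mathrm{T}}=0$. I would say explicitly that this is where periodicity enters, and that it is assumed from Appendix~\ref{app:sbp_skewsym} rather than reproved. Finally, unitarity here is with respect to the Euclidean inner product, which is the discrete inner product up to the constant factor $\Delta V$. That constant does not affect either statement.
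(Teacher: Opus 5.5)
Your proposal is correct and follows the paper's proof essentially line for line. Like the paper, you transpose the Weyl-ordered sum and substitute $\mathsf{D}_j^{\mathrm{T}} = -\mathsf{D}_j$ and $\mathsf{F}_j^{\mathrm{T}} = \mathsf{F}_j$ to obtain $\mathsf{L}^{\mathrm{T}} = -\mathsf{L}$, then use that $\mathsf{L}$ is real to write $\mathsf{U}^{\mathrm{T}}\mathsf{U} = \exp(-\mathsf{L}\,\Delta t)\exp(\mathsf{L}\,\Delta t) = \mathsf{I}$; your Kronecker-product check that $\mathsf{D}_j$ inherits skew-symmetry from the one-dimensional operator and your differentiation cross-check are harmless additions, since the paper simply takes Eq.~(\ref{eq:D_antiherm}) as a hypothesis.
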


\begin{proof}
Transposing~(\ref{eq:kvn_discrete}) and applying
$(\mat{A}\mat{B})^{\mathrm{T}} = \mat{B}^{\mathrm{T}}\mat{A}^{\mathrm{T}}$:
\begin{equation}
\mat{L}^{\mathrm{T}}
= -\frac{1}{2}\sum_{j=1}^{N}
  \bigl(\mat{D}_j^{\mathrm{T}}\mat{F}_j^{\mathrm{T}}
       +\mat{F}_j^{\mathrm{T}}\mat{D}_j^{\mathrm{T}}\bigr).
\end{equation}
Substituting $\mat{D}_j^{\mathrm{T}} = -\mat{D}_j$~(\ref{eq:D_antiherm})
and $\mat{F}_j^{\mathrm{T}} = \mat{F}_j$~(\ref{eq:F_herm}):
\begin{align}
\mat{L}^{\mathrm{T}}
&= \frac{1}{2}\sum_{j=1}^{N}
  \bigl(\mat{D}_j\mat{F}_j + \mat{F}_j\mat{D}_j\bigr)
= -\mat{L},
\end{align}
so $\mat{L} + \mat{L}^{\mathrm{T}} = 0$ exactly. This identity uses no
property of $\vect{f}$, $\Delta a$, $M$, or stencil order beyond the
two adjoint conditions above. For the propagator, since $\mat{L}$ is real:
\begin{align}
\mat{U}^{\mathrm{T}}\mat{U}
= \exp(-\mat{L}\,\Delta t)\exp(\mat{L}\,\Delta t)
= \mat{I}.
\end{align}
\end{proof}

The evolution $\psi^{n+1} = \mat{U}\,\psi^n$ is the exact time-stepping
form of~(\ref{eq:schrodinger}). Probability conservation
$\|\psi^{n+1}\| = \|\psi^n\|$ follows from the strict unitarity of
$\mat{U}$. The physics-driven nonlinearity in $\vect{f}$ is carried
exactly by the diagonal matrices $\mat{F}_j$, without mathematical
truncation or linearization. The resulting discrete generator coincides
with the skew-symmetric form
$\tfrac{1}{2}(\mat{F}\mat{D} + \mat{D}\mat{F})$ that
Morinishi~\cite{Morinishi1998,Morinishi2010} derived by rewriting
the convective terms to conserve discrete kinetic energy for
incompressible flows, and that Joseph~\cite{Joseph2020} identified as
the appropriate discretization class for maintaining KvN unitarity.
The present derivation reaches the same algebraic structure from the
Hilbert-space embedding. The Weyl ordering resolves the operator
ambiguity in $L^2(\Omega_a, \mathbb{C})$, anti-Hermiticity requires real
velocity fields (Lemma~\ref{lem:antiherm}), and SBP
operators~(\ref{eq:D_antiherm}) supply the discrete skew-symmetry. The
convergence of the two routes reflects that both enforce quadratic norm
conservation; the Hilbert-space route does not require
$\nabla \cdot \vect{f} = 0$ and extends to the quantum setting.

\subsection{Boundary treatment and phase-space confinement}
\label{sec:kraus}

The periodic SBP operator assumes a periodic phase-space domain. A finite
domain requires treatment at its edges to prevent probability that
approaches one boundary from reappearing at the opposite boundary as a
spurious wrap-around artifact. Rather than modifying the interior
operator, which would destroy the algebraic anti-Hermiticity, a split-step
scheme separates interior transport from boundary absorption. The Kraus
absorbing layer attenuates $\psi$ near each domain edge;
its CPTP structure provides exact probability accounting that classical
sponge-zone formulations lack. The boundary treated here is the edge of
the computational phase-space box $\Omega_a \subset \mathbb{R}^N$, not
the boundary of the physical-space domain on which the parent PDE is
posed. Physical-space boundary conditions enter only at the projection
stage, through the choice of Galerkin basis and the resulting coupling
coefficients $C_{jmn}$.

The layer was built from a one-dimensional cosine taper over a layer of
width $n_s$ grid spacings from each boundary:
\begin{equation}
w(d) = \begin{cases}
\tfrac{1}{2}\bigl(1 + \cos\bigl(\pi(1 - d/n_s)\bigr)\bigr), & d < n_s, \\[4pt]
1, & d \geq n_s,
\end{cases}
\label{eq:taper}
\end{equation}
where $d$ is the distance in grid spacings from the nearest boundary.
The full $N$-dimensional weight at grid point $\vect{i} = (i_1,\ldots,i_N)$
is the separable product
\begin{equation}
W_{\vect{i}} = \prod_{k=1}^{N} w\,\!\bigl(d(i_k)\bigr) \in [0,1],
\label{eq:weight}
\end{equation}
where $d(i_k)$ is the distance from the nearest boundary in direction $k$.
The weight equals unity in the interior and vanishes at the domain boundary.
The separable structure maps directly to single-register quantum gates.

Two Kraus operators are defined from this weight~(\ref{eq:weight}):
$\mat{K}_1 = \mathrm{diag}(\sqrt{W_{\vect{i}}})$, which retains
probability in the interior, and
$\mat{K}_2 = \mathrm{diag}(\sqrt{1-W_{\vect{i}}})$, which gives
the absorbed fraction. In the computational basis of the quantum register
these read $\mat{K}_1 = \sum_{\vect{i}}\sqrt{W_{\vect{i}}}|\vect{i}\rangle\langle\vect{i}|$
and $\mat{K}_2 = \sum_{\vect{i}}\sqrt{1-W_{\vect{i}}}|\vect{i}\rangle\langle\vect{i}|$,
making the support of each operator immediately transparent. The interior
propagator is applied first; then at each grid point,
\begin{equation}
\psi^{n+1}_{\vect{i}} = \sqrt{W_{\vect{i}}}\;(\mat{U}\,\psi^n)_{\vect{i}},
\label{eq:split_step}
\end{equation}
and the Kraus operators satisfy the completeness relation
\begin{equation}
\mat{K}_1^{\mathrm{T}} \mat{K}_1 + \mat{K}_2^{\mathrm{T}} \mat{K}_2
= \mat{I},
\label{eq:cptp}
\end{equation}
which is the discrete Kraus representation of a CPTP quantum
channel~\cite{Nielsen2010}. The retained probability and its
complement are defined by:
\begin{equation}
P_{\mathrm{ret}}(t) \equiv \norm{\psi(t)}^2
= \sum_{\vect{i}} \abs{\psi_{\vect{i}}(t)}^2 \Delta V,
\label{eq:pret_def}
\end{equation}
\begin{equation}
P_{\mathrm{lost}}(t) \equiv 1 - P_{\mathrm{ret}}(t).
\label{eq:plost_def}
\end{equation}
The completeness relation~(\ref{eq:cptp}) guarantees
$P_{\mathrm{ret}}(t) + P_{\mathrm{lost}}(t) = 1$ to machine precision
at every step. The quantity $P_{\mathrm{lost}}$ serves as an a posteriori
diagnostic. Its smallness confirms that the computational domain is
adequately sized.

The interior step is a unitary channel representing closed-system evolution;
the Kraus layer is a non-unitary channel representing interaction with an
environment. The non-unitary channel admits a unitary dilation on an
extended Hilbert space, the form required for execution on quantum
hardware. The dilation construction, ancilla-circuit realization, and
qubit and gate-count overheads are presented in
Sec.~\ref{sec:quantum}.

Whether this absorbing layer actively absorbs probability or instead
remains inactive throughout the simulation depends on whether the
classical dynamics confine trajectories to a bounded region of phase
space\label{sec:confinement}. The phase-space bounds can be established
analytically from the modal amplitude equations, independently of any
choice of grid or domain. Two scalar quantities built from the modal
state $\vect{a} = (a_1,\ldots,a_N)$ encode the relevant information:
\begin{equation}
Z(\vect{a}) \equiv \tfrac{1}{2}\norm{\vect{a}}_2^2
= \tfrac{1}{2}\sum_{j=1}^{N} a_j^2,
\label{eq:Z_def}
\end{equation}
the \emph{squared modal amplitude}, and
\begin{equation}
T(\vect{a}) \equiv \sum_{j,m,n=1}^{N} C_{jmn}\, a_j\, a_m\, a_n,
\label{eq:cubic_form}
\end{equation}
the \emph{cubic contraction of the interaction coefficients}. $Z$ is
half the squared Euclidean ($\ell^2$) norm of the modal state vector,
$\norm{\vect{a}}_2 = (\sum_j a_j^2)^{1/2}$. It is not the kinetic
energy (which carries the $|\vect{k}|^{-2}$ weighting specified
in Sec.~\ref{sec:models}), but a geometric measure of how far the state
vector lies from the origin of phase space. A bound on $Z(t)$ translates
directly into a bound on $\norm{\vect{a}(t)}_2$, and hence into the
required size of the computational domain $\Omega_a$.

All three test cases of Sec.~\ref{sec:models} fit the general $N$-mode
form with diagonal linear damping and quadratic nonlinearity,
\begin{equation}
\dot{a}_j = \sum_{m,n=1}^{N} C_{jmn}\, a_m a_n - \gamma_j\, a_j,
\qquad j = 1,\dots,N,
\label{eq:general_quadratic}
\end{equation}
with damping rates $\gamma_j \geq 0$ and slowest rate
$\gamma_{\min} \equiv \min_{1\leq j\leq N} \gamma_j$. Differentiation
of $Z$ along trajectories of~(\ref{eq:general_quadratic}), combined with
the definition~(\ref{eq:cubic_form}) of $T$, yields
\begin{equation}
\dot{Z} \;=\; \sum_{j=1}^N a_j\, \dot{a}_j
\;=\; T(\vect{a}) \;-\; \sum_{j=1}^N \gamma_j\, a_j^2.
\label{eq:Zdot}
\end{equation}

Identity~(\ref{eq:Zdot}) presents $Z$ as a candidate Lyapunov function
for the origin $\vect{a} = 0$. The quantity $Z$ is positive definite
and continuously differentiable. The dissipative term in $\dot Z$ is
sign-definite and contracting, and the nonlinear term is sign-indefinite,
the only mechanism through which $Z$ can grow. Lyapunov stability of
the origin therefore reduces to controlling the nonlinear contribution.

The structural condition $T(\vect{a}) \equiv 0$ supplies that control.
It has a direct physical interpretation. When $\gamma_j$ is set to zero
in~(\ref{eq:general_quadratic}), the time derivative of
$\norm{\vect{a}}_2^2$ along trajectories reduces to $2T(\vect{a})$, so
$T \equiv 0$ is precisely the statement that the nonlinearity, taken
alone, conserves the modal $\ell^2$ norm. The identity is the
modal-amplitude form of the conservation property carried by
skew-symmetric advection in the parent PDE, and is the same algebraic
norm preservation inherited at the Hilbert-space level by the
anti-Hermitian KvN generator
(Sec.~\ref{sec:weyl}) and at the discrete operator level by the SBP
propagator (Sec.~\ref{sec:sbp}). The conservative structure that
produces unitary evolution at the operator level produces a strict
Lyapunov function at the modal-amplitude level. The energy-budget identity~(\ref{eq:Zdot}) and its consequence under cyclic cancellation are classical~\cite{Lee1952,Kraichnan1959,Waleffe1992}, and the conversion to an exponential bound is the standard Lyapunov argument for quadratic systems with linear damping~\cite{Khalil2002}; what is specific to the present analysis is the use of the resulting bound to certify numerical inactivity of the Kraus absorbing layer.

The hypothesis $T \equiv 0$ is satisfied by Galerkin truncations
that inherit a quadratic conservation law from the parent PDE, and
fails when external forcing, off-resonant interactions, or non-cyclic
coupling coefficients break the inherited identity. The three test
cases of Sec.~\ref{sec:models} are all of the former type, by the
cyclic telescoping identity for resonant triads derived in
Appendix~\ref{app:parent_pdes}. In broader regimes where $T \not\equiv 0$,
confinement is no longer guaranteed by the analysis below and the
Kraus absorbing layer becomes the mechanism that maintains a well-posed
simulation on a finite domain.

When $T \equiv 0$ holds and $\gamma_{\min} > 0$, the damping
in~(\ref{eq:Zdot}) drives $Z$
strictly downward at a rate fixed by $\gamma_{\min}$, and the origin
becomes globally exponentially stable in $\norm{\vect{a}}_2$. The
following proposition states this conclusion quantitatively.

\begin{proposition}[Phase-space norm bound]
\label{prop:norm_bound}
For the system~(\ref{eq:general_quadratic}) with damping rates
$\gamma_j > 0$, let $\gamma_{\min} \equiv \min_{1\leq j\leq N}\gamma_j > 0$.
If the cubic contraction~(\ref{eq:cubic_form}) vanishes identically,
$T(\vect{a}) \equiv 0$, then for every initial condition
$\vect{a}(0) \in \mathbb{R}^N$ the solution exists for all $t \geq 0$
and satisfies
\begin{equation}
\norm{\vect{a}(t)}_2 \;\leq\; \norm{\vect{a}(0)}_2\, e^{-\gamma_{\min}\, t}.
\label{eq:norm_bound}
\end{equation}
In particular, $\norm{\vect{a}(t)}_2$ is monotonically non-increasing.
\end{proposition}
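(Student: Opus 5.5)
The plan is a standard Lyapunov argument built on the energy-budget identity~(\ref{eq:Zdot}), in three steps: local existence and uniqueness, a differential inequality for $Z$, and a no-blow-up argument for global existence.

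\emph{Step 1: local existence.} The right-hand side of~(\ref{eq:general_quadratic}) is a polynomial in $\vect{a}$, hence locally Lipschitz on $\mathbb{R}^N$. The Picard--Lindel\"of theorem therefore gives a unique $C^1$ solution on a maximal interval $[0,t^*)$ with $t^*\in(0,\infty]$. Moreover, $t^*<\infty$ is possible only if $\norm{\vect{a}(t)}_2\to\infty$ as $t\to t^*$.

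\emph{Step 2: the differential inequality.} On $[0,t^*)$, identity~(\ref{eq:Zdot}) with the hypothesis $T\equiv 0$ gives
\begin{equation*}
\dot Z = -\sum_j \gamma_j a_j^2 \le -\gamma_{\min}\sum_j a_j^2 = -2\gamma_{\min} Z .
\end{equation*}
Multiplying by the integrating factor $e^{2\gamma_{\min}t}$ shows that $\tfrac{\dd}{\dd t}\bigl(e^{2\gamma_{\min}t}Z\bigr)\le 0$. Integrating gives $Z(t)\le Z(0)e^{-2\gamma_{\min}t}$, and taking square roots of $Z=\tfrac12\norm{\vect{a}}_2^2$ yields~(\ref{eq:norm_bound}) on $[0,t^*)$. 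Monotonicity follows directly from $\dot Z\le 0$, which also holds when some $\gamma_j$ vanish, although here they are all positive.

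\emph{Step 3: global existence.} By Step~2, $\norm{\vect{a}(t)}_2\le\norm{\vect{a}(0)}_2$ throughout $[0,t^*)$, so the solution remains in a compact ball. This contradicts the blow-up alternative of Step~1, hence $t^*=\infty$ and the bound holds for all $t\ge0$.

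The only delicate point is this global-existence step. Quadratic ODEs can blow up in finite time in general, so the a priori bound must be combined explicitly with the continuation theorem rather than assumed. Everything else is routine.
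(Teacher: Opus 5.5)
Your proposal is correct and follows the paper's proof essentially step for step. Both arguments take local existence on a maximal interval from the polynomial right-hand side, use $T \equiv 0$ to get $\dot Z = -\sum_j \gamma_j a_j^2 \le -2\gamma_{\min} Z$, integrate with the factor $e^{2\gamma_{\min} t}$ and take square roots to obtain~(\ref{eq:norm_bound}), and then use the a priori bound to exclude finite-time blow-up, where your explicit appeal to the continuation theorem simply spells out the paper's brief remark that $T_* = \infty$.
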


\begin{proof}
The polynomial right-hand side of~(\ref{eq:general_quadratic})
guarantees a unique smooth solution on a maximal interval $[0,T_*)$.
On this interval, the hypothesis $T \equiv 0$ reduces~(\ref{eq:Zdot})
to $\dot{Z} = -\sum_j \gamma_j\, a_j^2$. The slowest-damped rate bounds
this sum from below: $\gamma_j \geq \gamma_{\min}$ for every $j$ gives
\begin{equation*}
\sum_{j=1}^N \gamma_j\, a_j^2
\;\geq\; \gamma_{\min} \sum_{j=1}^N a_j^2.
\end{equation*}
The factor $\tfrac12$ in the definition $Z = \tfrac12 \norm{\vect{a}}_2^2$
inverts to give $\sum_j a_j^2 = 2Z$, so the bound becomes
$\dot{Z} \leq -2\gamma_{\min}\, Z$, a linear differential inequality.
The integrating-factor identity
\begin{equation*}
\frac{d}{dt}\!\left[Z(t)\, e^{2\gamma_{\min} t}\right]
\;=\; \bigl(\dot{Z} + 2\gamma_{\min}\, Z\bigr)\, e^{2\gamma_{\min} t}
\;\leq\; 0
\end{equation*}
yields $Z(t) \leq Z(0)\, e^{-2\gamma_{\min} t}$. Taking square roots
halves the rate exponent and recovers~(\ref{eq:norm_bound}) via
$Z = \tfrac12 \norm{\vect{a}}_2^2$. The bound is uniform on $[0,T_*)$
and precludes finite-time blow-up, so $T_* = \infty$.
\end{proof}

Fourier modes couple through the quadratic advection nonlinearity only
when their wavevectors satisfy a closure condition: the product
$e^{\ii\vect{k}_m\cdot\vect{x}}\,
e^{\ii\vect{k}_n\cdot\vect{x}} = e^{\ii(\vect{k}_m+\vect{k}_n)\cdot\vect{x}}$
projects onto a third mode $\vect{k}_j$ only when
$\vect{k}_j = -(\vect{k}_m+\vect{k}_n)$, so that
$\vect{k}_1 + \vect{k}_2 + \vect{k}_3 = \vect{0}$. For triadic truncations
satisfying this resonance condition, the nonzero components
of $C_{jmn}$ reduce to cyclic entries $C_j$, so
$T(\vect{a}) = a_1 a_2 a_3 \sum_j C_j$. The condition $T \equiv 0$ then
reduces to $\sum_j C_j = 0$. Throughout this paper, the notation
$(j,m,n)$ \emph{cyclic} denotes the three index triples obtained by
cyclic permutation of $(1,2,3)$: $(1,2,3)$, $(2,3,1)$, and $(3,1,2)$.
The three test cases all take the triadic form
\begin{equation}
\dot{a}_j = -\gamma_j a_j + C_j a_m a_n, \quad (j,m,n) \text{ cyclic};
\end{equation}
their coupling coefficients and conservation laws are recorded in
Sec.~\ref{sec:models}. For this form, the following corollary applies.

\begin{corollary}[Triadic confinement]
\label{cor:triad_confinement}
\begin{sloppypar}
For a resonant triad with $\gamma_j > 0$ satisfying $\sum_j C_j = 0$, the
cubic form $T(\vect{a}) = a_1 a_2 a_3 \sum_j C_j$ vanishes identically.
Proposition~\ref{prop:norm_bound} then implies that every trajectory
satisfies $\norm{\vect{a}(t)}_2 \leq \norm{\vect{a}(0)}_2$; the domain
need only contain the initial probability support.
\end{sloppypar}
\end{corollary}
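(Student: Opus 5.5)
The plan is to reduce the corollary to Proposition~\ref{prop:norm_bound}. That requires casting the triadic system as an instance of the general quadratic form~(\ref{eq:general_quadratic}), checking that its cubic contraction vanishes identically, and then reading off the bound.

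\textbf{Step 1: identify the coefficient tensor.} For $N=3$ we write the triadic equations as $\dot a_j = -\gamma_j a_j + \sum_{m,n} C_{jmn} a_m a_n$. The only nonzero entries of $C_{jmn}$ occur for $\{m,n\}$ equal to the two indices other than $j$. We split the coupling symmetrically, $C_{jmn}=C_{jnm}=\tfrac12 C_j$ for $(j,m,n)$ cyclic, and set all other entries to zero. The double sum then reproduces $C_j a_m a_n$. Any other admissible splitting gives the same $T$, since $T$ depends only on the sum $C_{jmn}+C_{jnm}$.

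\textbf{Step 2: compute $T$ from~(\ref{eq:cubic_form}).} Each nonzero term contributes $C_{jmn}a_j a_m a_n$ with $\{j,m,n\}=\{1,2,3\}$, so every term is a multiple of the same monomial $a_1a_2a_3$. Summing the two entries for each $j$ gives $T(\vect{a}) = a_1a_2a_3\sum_j C_j$. This is the displayed identity in the statement, obtained directly from the definition rather than quoted. The hypothesis $\sum_j C_j=0$ then gives $T\equiv 0$ on all of $\mathbb{R}^3$.

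\textbf{Step 3: apply the proposition.} The hypothesis $\gamma_j>0$ gives $\gamma_{\min}>0$, so Proposition~\ref{prop:norm_bound} applies verbatim. It yields global existence and $\norm{\vect{a}(t)}_2 \le \norm{\vect{a}(0)}_2 e^{-\gamma_{\min}t} \le \norm{\vect{a}(0)}_2$.

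\textbf{Step 4: the domain statement.} Let $\rho(\cdot,0)$ be supported in the ball $B_R=\{\norm{\vect{a}}_2\le R\}$. The Liouville density is transported by the flow map, $\rho(\cdot,t)$ being the pushforward of $\rho(\cdot,0)$. Its support is therefore the flow-map image of the initial support, which lies in $B_R$ by Step~3. A domain $\Omega_a$ containing $B_R$, with the absorbing layer placed outside $B_R$, consequently never receives probability in the continuous dynamics.

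The only step needing care is this last one. The claim that the domain ``need only contain the initial probability support'' should be read as containing the smallest centered ball enclosing that support, since the norm bound controls distance from the origin rather than the support set itself. I would state it in that form.
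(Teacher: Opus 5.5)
Your proposal is correct and follows the paper's route. The paper likewise observes that for a resonant triad only the cyclic entries of $C_{jmn}$ survive, so $T(\vect{a}) = a_1a_2a_3\sum_j C_j \equiv 0$, and then applies Proposition~\ref{prop:norm_bound} directly; your Steps~1--2 simply make that tensor bookkeeping explicit.

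Your Step~4 reading is a correct sharpening of the paper's looser phrasing: $\Omega_a$, inside the absorbing layer, must contain the smallest origin-centred ball enclosing the initial support, not merely the support itself, because the bound controls only $\norm{\vect{a}}_2$. This matters in practice: for the paper's $\bar{\vect{a}}\pm 6\sigma$ box that ball has radius $\approx 2.9$, which exceeds the domain half-width $h = 1.9$.
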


The three test cases of Sec.~\ref{sec:models} realize two combinations
of the proposition's hypotheses, and the role of the Kraus absorbing
layer differs accordingly. The Navier--Stokes and Hasegawa--Mima triads
satisfy both $T \equiv 0$ and $\gamma_{\min} > 0$, so
Proposition~\ref{prop:norm_bound} applies directly: the modal norm
decays exponentially at rate $\gamma_{\min}$, trajectories spiral
toward the origin, and the support of the wavefunction $\psi$ is
actively contracted away from the boundary of $\Omega_a$. The Kraus
layer is \emph{numerically inactive} in this case, with
$P_{\mathrm{lost}}$ reduced to a numerical-diffusion floor that
vanishes with grid refinement. The Euler triad satisfies $T \equiv 0$
as well, but with $\gamma_j = 0$ for every $j$; the
bound~(\ref{eq:norm_bound}) degenerates to exact conservation,
$\norm{\vect{a}(t)}_2 = \norm{\vect{a}(0)}_2$, and trajectories orbit
on the intersection of the constant-norm sphere with the
constant-energy ellipsoid fixed by the initial condition. Conservation
of $\norm{\vect{a}}_2$ alone does not bound the individual mode
amplitudes. Energy redistribution among the three modes can drive an
individual $|a_j(t)|$ above its initial value $|\bar{a}_j|$, with the
maximum set by the joint $(Z, E)$ conservation. The wavefunction
support, centered on the trajectory mean with spread $\sigma$, can
therefore reach the absorbing layer at late times even though the
trajectory itself remains bounded, and the Kraus layer absorbs the
resulting tail. Its action remains numerically inactive in the same
sense as the dissipative case. $P_{\mathrm{lost}}$ is small relative
to the resolution error and vanishes with a larger domain.

In both regimes, the Kraus layer's contribution to the validation is
not active absorption but two structural properties whose value is
independent of how much probability flux the dynamics direct toward
the boundary. First, the CPTP completeness
relation~(\ref{eq:cptp}) provides exact probability accounting through
the residual $|P_{\mathrm{ret}} + P_{\mathrm{lost}} - 1|$, which is
maintained at the round-off floor of the simulation
(Sec.~\ref{sec:verification}); without this construction, probability
lost to numerical diffusion at the boundary would be silently
destroyed rather than accounted for. Second, the layer prevents the
periodic SBP operator from reintroducing PDF tails at the opposite
boundary as a spurious wrap-around artifact, an effect that becomes
visible at coarser resolutions or longer integration times even when
the modal norm itself is bounded (Sec.~\ref{sec:pdf_evolution}). The
demonstration of active absorption, where $P_{\mathrm{lost}}$ reflects
physical flux from a system with $T \not\equiv 0$, lies outside the
scope of the present validation and is identified as a target for
follow-up work.
Sec.~\ref{sec:verification} verifies the two regime assignments
numerically.

\section{Test cases}
\label{sec:models}

The three equations studied here arise from the same quadratic advection
nonlinearity $(\vect{u}\cdot\nabla)\vect{u}$, projected onto resonant
triads of Fourier modes satisfying
$\vect{k}_1 + \vect{k}_2 + \vect{k}_3 = \vect{0}$. The cyclic structure
of the resulting coupling coefficients ensures
$\sum_j C_j = 0$ (Appendices~\ref{app:ns} and~\ref{app:hm}), so all three
cases satisfy the structural condition $T \equiv 0$ of
Proposition~\ref{prop:norm_bound}. The cases differ in the constraint
imposed on the parent PDE (incompressibility for Navier--Stokes and
Euler, modified vorticity inversion for Hasegawa--Mima) and in whether
dissipation is present. This combination spans two physical mechanisms
identified in Sec.~\ref{sec:confinement}: dissipative trajectory
contraction (NS, HM) and Hamiltonian orbit confinement on the joint
$(Z, E)$-constant curve (Euler). In both cases the Kraus absorbing
layer is numerically inactive, with $P_{\mathrm{lost}}$ small relative
to the resolution error. Governing equations and their Galerkin
truncations are derived in Appendix~\ref{app:parent_pdes}; this section
presents the resulting modal systems and identifies their algebraic
properties.

\subsection{Navier--Stokes triad}
\label{sec:ns_model}
A three-mode truncation of the 2D vorticity equation
(Appendix~\ref{app:ns}) to wavevectors satisfying
$\vect{k}_1 + \vect{k}_2 + \vect{k}_3 = \vect{0}$ yields
\begin{equation}
\dot{a}_j = -\nu|\vect{k}_j|^2\, a_j + C_j\, a_m a_n,
\quad (j,m,n) \text{ cyclic},
\label{eq:ns_triad}
\end{equation}
with damping rate $\gamma_j = \nu|\vect{k}_j|^2$ and coupling coefficients
$C_j$ derived in Appendix~\ref{app:ns}, Eq.~(\ref{eq:ns_coupling}). The
telescoping identity~(\ref{eq:sum_cj_zero}) confirms $\sum_j C_j = 0$,
so Corollary~\ref{cor:triad_confinement} applies: every trajectory decays
monotonically in $\norm{\vect{a}}_2$ and the Kraus layer is numerically
inactive. The energy conservation condition $\sum_j C_j/|\vect{k}_j|^2 = 0$
(proved in Appendix~\ref{app:ns}, Eq.~(\ref{eq:sum_cj_energy_zero}))
further ensures that the nonlinear contribution to $\dot{E}$ vanishes,
where $E(\vect{a}) = \tfrac{1}{2}\sum_j a_j^2/|\vect{k}_j|^2$ is the
total modal kinetic energy of the triad. The nonlinearity redistributes
energy among the three modes without changing the total, so $E$ evolves
through viscous dissipation alone and every trajectory satisfies
$E(\vect{a}(t)) \leq E(\vect{a}(0))$ for all $t \geq 0$. The viscous term renders the phase-space flow compressible,
with $\nabla\cdot\vect{f} = -\sum_j\nu|\vect{k}_j|^2$. In the
Weyl-ordered generator, this compressibility arises through the commutator
structure of Eq.~(\ref{eq:kvn_discrete}) and requires no separate
discretization.
\subsection{Incompressible Euler triad}
\label{sec:euler_model}
Setting $\nu = 0$ in~(\ref{eq:ns_triad}) gives
\begin{equation}
\dot{a}_j = C_j a_m a_n, \quad (j,m,n) \text{ cyclic},
\label{eq:euler_triad}
\end{equation}
with the same interaction coefficients as the viscous case. The phase-space
flow is divergence-free ($\nabla\cdot\vect{f} = 0$), and both the
phase-space norm $Z = \tfrac{1}{2}\norm{\vect{a}}_2^2$ and the modal
kinetic energy $E = \tfrac{1}{2}\sum_j a_j^2/|\vect{k}_j|^2$ are exactly
conserved by the nonlinear dynamics. Conservation of $Z$ follows from
$\dot{Z} = T(\vect{a}) = a_1 a_2 a_3\sum_j C_j = 0$, using
$\sum_j C_j = 0$. Conservation of $E$ follows from the separate
telescoping identity $\sum_j C_j/|\vect{k}_j|^2 = 0$
(Appendix~\ref{app:ns}, Eq.~(\ref{eq:sum_cj_energy_zero})). For the
test wavevectors used in Sec.~\ref{sec:results} the magnitudes
$|\vect{k}_j|$ are distinct, so $Z$ and $E$ are independent functions
on $\mathbb{R}^3$ and trajectories lie on the one-dimensional
intersection of the constant-$Z$ sphere with the constant-$E$
ellipsoid.
\subsection{Hasegawa--Mima triad}
\label{sec:hm_model}
The Hasegawa--Mima equation~\cite{Hasegawa1978} governs electrostatic
drift-wave turbulence in magnetized plasmas. Its triad truncation
(Appendix~\ref{app:hm}) to wavevectors satisfying
$\vect{k}_1 + \vect{k}_2 + \vect{k}_3 = \vect{0}$ gives
\begin{equation}
\dot{b}_j = -\mu\,|\vect{k}_j|^2\, b_j
+ C_j^{\mathrm{HM}}\, b_m b_n, \quad (j,m,n) \text{ cyclic},
\label{eq:hm_triad}
\end{equation}
where $b_j$ are potential vorticity amplitudes and $\mu$ is a collisional
damping rate. The coupling coefficients differ from the NS case by the
replacement $|\vect{k}|^{-2} \to (|\vect{k}|^2+1)^{-1}$ (Appendix~\ref{app:hm},
Eq.~(\ref{eq:hm_coupling})), reflecting the
modified potential-vorticity inversion introduced by the ion polarization
drift.
The difference structure of the HM coefficients is identical to that of
the NS coefficients, so the telescoping identity carries over to give
$\sum_j C_j^{\mathrm{HM}} = 0$ (Appendix~\ref{app:hm}).
Corollary~\ref{cor:triad_confinement} applies: every trajectory decays
monotonically in $\norm{\vect{b}}_2$ and the Kraus layer is numerically
inactive. The phase-space compressibility is
$\nabla\cdot\vect{f} = -\mu\sum_j|\vect{k}_j|^2$, structurally
identical to the NS case.

\section{Numerical results}
\label{sec:results}

Each of the KvN solutions was compared to a Monte Carlo (MC)
simulation of the respective parent ODE~(\ref{eq:ode}). A population of $N_s$
independent trajectories $\vect{a}^{(s)}(t)$ ($s = 1,\ldots,N_s$) was
integrated from samples of the initial distribution
$\rho(\vect{a},0)$ using a classical time integrator (fourth-order
Runge--Kutta, RK4, with the same time step as the KvN solver). Moments of $\rho(\vect{a},t)$ were estimated by ensemble averaging,
$\langle g \rangle_{\mathrm{MC}}(t) \equiv N_s^{-1}\sum_s g(\vect{a}^{(s)}(t))$,
with statistical error $\mathrm{SE} = O(N_s^{-1/2})$. The MC estimator
converges to the exact Liouville evolution in the limit $N_s \to \infty$,
$\Delta t \to 0$. The initial ensemble samples $\rho(\vect{a},0)$ such that each
trajectory is an exact characteristic of Eq.~(\ref{eq:liouville}), and
the empirical density of the ensemble is a consistent estimator of
$\rho(\vect{a},t)$. MC therefore serves as a model-free reference
against which the KvN discretization can be validated without assuming
any discretization-side quantity. The KvN discretization, in contrast,
solves the continuous PDE~(\ref{eq:schrodinger}) on a fixed
phase-space grid and reports the density $\rho = |\psi|^2$ directly;
comparison with MC tests the claim that the Hilbert-space-embedded
evolution reproduces the classical statistics to the accuracy allowed
by the grid and Krylov parameters. The $1/\sqrt{N_s}$ convergence of MC
is also the cost that KvN with amplitude estimation is designed to
circumvent for rare-event queries (Sec.~\ref{sec:quantum}), but for the
classical verification reported in this section MC is used at sample
counts large enough that its statistical error is well below the
KvN--MC discrepancies being reported.

\subsection{Computational setup}
\label{sec:setup}

The three triadic truncations span the dissipative and Hamiltonian
regimes identified in Sec.~\ref{sec:confinement}, and contrast fluid
(NS, Euler) with plasma (HM) coupling structure while remaining
simple enough for analytical verification against the predictions of
Secs.~\ref{sec:confinement} and~\ref{sec:models}.

The three triadic cases share wavevectors $\vect{k}_1 = (1,0)$, $\vect{k}_2 = (1,1)$,
$\vect{k}_3 = (-2,-1)$. Evaluating the coupling
formula~(\ref{eq:ns_coupling}) gives $C_1 = 0.3$, $C_2 = -0.8$,
$C_3 = 0.5$, used by the NS triad ($\nu = 0.1$) and the Euler triad
($\nu = 0$). For the HM triad, the same wavevectors evaluate
via~(\ref{eq:hm_coupling}) to $C_1^{\mathrm{HM}} = 1/6$,
$C_2^{\mathrm{HM}} = -1/3$, $C_3^{\mathrm{HM}} = 1/6$, with $\mu = 0.1$;
their smaller magnitudes compared to $|C_j| \leq 0.8$ for the NS case
reflect the reduced coupling strength due to the polarization drift. The
Kraus absorbing layer used a cosine-taper width $n_s = 3$.

All computations used double-precision arithmetic with fourth-order periodic
SBP operators (Appendix~\ref{app:sbp}) and a uniform time step
$\Delta t = 0.02$. The action $\mat{U}\psi$ was approximated by a Krylov
subspace projection~\cite{AlMohy2011} that builds a low-dimensional basis
from successive sparse matrix--vector products and computes the
matrix-exponential action within that basis without forming $\mat{U}$
explicitly. The Arnoldi residual was monitored at each step. For the NS
triad, the Krylov dimension required to reduce the residual below
$10^{-14}$ was $m \leq 12$, and was comparable for the other two cases.
Under this tolerance, the error in $\norm{\mat{U}\psi}$ was bounded by a
quantity of the same order~\cite{AlMohy2011}, so the observed unitarity
residual at the $10^{-15}$ level reflects floating-point rounding rather
than Krylov truncation.

The initial wavefunction was $\psi(\vect{a},0) =
\sqrt{\rho_0(\vect{a})}$, where $\rho_0(\vect{a}) \propto
\exp\bigl(-\tfrac{1}{2}\sum_j (a_j - \bar{a}_j)^2/\sigma^2\bigr)$ is a
Gaussian probability density with standard deviation $\sigma = 0.15$ in
each coordinate, centered at a specified mean $\bar{\vect{a}}$. For the
triad cases, the phase-space domain was defined by the cube $[-h, h]^3$ centered at
the origin with half-width $h = a_{0,\max} + 6\sigma$, where
$a_{0,\max} = \max_j\bar{a}_j$ is the largest initial modal amplitude.
For $\bar{\vect{a}} = (1.0, 0.8, 0.5)$ this gives $h = 1.9$ and a domain
$[-1.9, 1.9]^3$. The minimum margin from any mode's initial mean to the
nearest boundary is exactly $6\sigma$. The $6\sigma$ margin
places the sponge boundary at $P_{\mathrm{tail}} =
\tfrac{1}{2}\mathrm{erfc}(6/\sqrt{2}) \approx 10^{-9}$ initial
probability from the nearest boundary, making boundary truncation
negligible relative to all other error sources.
Statistical moments were obtained from $\avg{g} = \sum_{\vect{i}}
g(\vect{a}_{\vect{i}}) |\psi_{\vect{i}}|^2 \Delta V$. Monte Carlo
references were computed from RK4 trajectories sampled from the same
Gaussian initial condition.

All three triad cases ($N=3$) used $M = 40$ grid points per dimension,
giving $K = 64{,}000$ and $\sigma/\Delta a = 0.15 \times 40/3.8 = 1.58$
uniformly across the NS, Euler, and HM systems.

Monte Carlo sample sizes were chosen so that the standard error
$\mathrm{SE}(E) = \sigma_E/\!\sqrt{N_s}$ lay below the KvN--MC validation
discrepancy at the validation grid~\cite{Fishman1996}. The
per-trajectory energy standard deviation $\sigma_E \approx 0.163$ for
the triads was estimated from a pilot run of $10^4$ NS-triad
trajectories at $M = 40$. With $N_s = 100{,}000$ this gives
$\mathrm{SE}(E) \approx 5.2 \times 10^{-4}$ ($0.07\%$ of $\avg{E}$),
a factor of three below the ${\sim}0.2\%$ KvN--MC offset observed at
$M = 40$ (Sec.~\ref{sec:ns_results}). This yielded a signal-to-noise
ratio of ${\approx} 2.8$. Grid convergence was validated against a
dedicated high-sample reference of $N_s = 10^6$ trajectories.

Table~\ref{tab:summary} summarizes the case-specific parameters and
principal diagnostics.

\begin{table}[tbp]
\caption{Computational parameters and diagnostics: number of modes, $N$,
   grid points per dimension, $M$,  total grid points, $K = M^N$,
  simulation horizon, $T$,  MC sample count, $N_s$,
  cumulative probability absorbed by the Kraus layer, $P_{\mathrm{lost}}(T)$, and qubit count,
  $n_q = N\lceil\log_2 M\rceil$. All cases use
  $\sigma = 0.15$, a domain margin of $n=6$ standard deviations
  (Sec.~\ref{sec:domain_sizing}), and fourth-order SBP.}
\label{tab:summary}
\begin{ruledtabular}
\begin{tabular}{lccccccc}
System & $N$ & $M$ & $K$ & $T$ & $N_s$ & $P_{\mathrm{lost}}(T)$ & $n_q$ \\
\hline
NS triad    & 3 & 40 & 64\,000  & 0.5 & 100\,000 & $3.0 \times 10^{-6}$  & 18 \\
Euler triad & 3 & 40 & 64\,000  & 1.0 & 100\,000 & $1.7 \times 10^{-5}$  & 18 \\
HM triad    & 3 & 40 & 64\,000  & 0.5 & 100\,000 & $2.6 \times 10^{-6}$  & 18 \\
\end{tabular}
\end{ruledtabular}
\end{table}

\subsection{Verification studies}
\label{sec:verification}

\subsubsection{Phase-space norm bound}

Proposition~\ref{prop:norm_bound} predicts two distinct confinement
regimes for the three test cases. Figure~\ref{fig:phase_space_trajectories}
shows a time history of the worst-case norm ratio
$\max_{\mathrm{corners}}\norm{\vect{a}(t)}_2/\norm{\vect{a}(0)}_2$ over
all $2^N$ corners of the support box $\bar{\vect{a}} \pm 6\sigma$, where
$\bar{a}_j$ is the initial condition mean in coordinate $j$, the most
stringent test of confinement.

For the NS and HM triads, the ratio decays monotonically below unity
throughout, confirming Corollary~\ref{cor:triad_confinement}. The two
envelopes are nearly indistinguishable, confirming that confinement is
governed by the shared algebraic condition $\sum_j C_j = 0$ rather than
system-specific physics. For the Euler triad, the ratio holds at exactly
unity, consistent with exact conservation of $\norm{\vect{a}}_2^2$.

\begin{figure}[tbp]
  \includegraphics[width=\columnwidth]{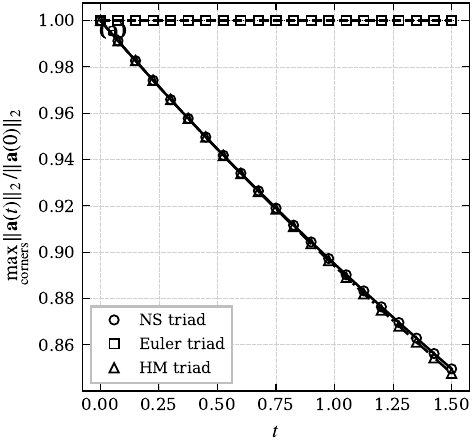}
  \caption{Worst-case norm ratio
    $\max_{\mathrm{corners}}\norm{\vect{a}(t)}_2/\norm{\vect{a}(0)}_2$
    over all $2^N$ corners of the support box $\bar{\vect{a}} \pm 6\sigma$:
    NS triad ($\circ$), Euler triad ($\square$),
    HM triad ($\triangle$). NS and HM envelopes decay monotonically
    (Corollary~\ref{cor:triad_confinement}); the Euler envelope holds at
    unity, consistent with exact conservation of $\norm{\vect{a}}_2^2$
    in the inviscid limit.}
\label{fig:phase_space_trajectories}
\end{figure}

\subsubsection{Grid convergence}
\label{sec:convergence}

The NS triad was considered for convergence analysis for three
reasons: it has a compressible phase-space flow ($\nabla\cdot\vect{f}
\neq 0$) unlike the Euler triad, it exercises the full
three-dimensional tensor-product grid, and its
antisymmetric coupling ensures $P_{\mathrm{lost}}$ is negligible at
the $6\sigma$ margin, isolating resolution error as the sole source of
discrepancy. Figure~\ref{fig:convergence} reports grid convergence for
the NS triad at $t = 0.5$ using second- and fourth-order periodic SBP
operators on grids $M = 16, 20, 24, 28, 32, 40$ ($K = M^3$ from 4\,096
to 64\,000). The error in expected modal kinetic energy
$\avg{E} = \avg{\tfrac{1}{2}\sum_j a_j^2/|\vect{k}_j|^2}$ decreases as
$(\Delta a)^p$ with pairwise rates $p = 2.00$ (second-order) and
$p = 4.00$ (fourth-order) at the finest grid pair ($M = 32 \to 40$),
consistent with the truncation error formulas
Eqs.~(\ref{eq:sbp2_truncation}) and~(\ref{eq:sbp4_truncation}). The
convergence rate is set by the SBP stencil, not by the KvN formulation.
At the coarsest grids ($\sigma/\Delta a = 0.63$, $M = 16$) the
fourth-order rates are sub-asymptotic because the initial
Gaussian occupies less than one grid cell per standard deviation; design
order $p = 4.00$ is reached once $\sigma/\Delta a \gtrsim 1$
($M \geq 28$). Second-order rates follow the same pattern
($p = 2.00$ at $M = 32 \to 40$).

Two independent references confirm convergence to the correct continuum
limit. Richardson extrapolation (RE) from the two finest grids ($M = 32, 40$)
gives $E^{\mathrm{RE}} = 0.6151$; a high-sample MC computation ($10^6$
trajectories) gives $E^{\mathrm{MC}} = 0.6149$. The two references agree
to $0.02\%$, confirming that both operator orders converge to the same
continuum solution. At $M = 40$, the fourth-order energy error against
$E^{\mathrm{RE}}$ is $4.1 \times 10^{-4}$, a factor of $10.7$ below the
$4.4 \times 10^{-3}$ second-order value, reflecting the combined effect
of the higher convergence order and the smaller leading constant of the
fourth-order stencil.

The probability loss [Fig.~\ref{fig:convergence}(b)] decreases
monotonically with refinement, from
$P_{\mathrm{lost}} = 1.2 \times 10^{-2}$ at $M = 16$ to
$3.0 \times 10^{-6}$ at $M = 40$ for the fourth-order operator. These
values reflect numerical diffusion from the finite-order discretization
spreading the PDF tails: by Proposition~\ref{prop:norm_bound}, physical
NS-triad trajectories cannot reach the boundary for $n = 6$, so the
absorbed probability is entirely a numerical artifact that vanishes with
refinement. The conservation residual
$|P_{\mathrm{ret}} + P_{\mathrm{lost}} - 1| < 10^{-15}$
at every $(M, \mathrm{order})$ combination, consistent with the Krylov
tolerance and floating-point rounding
(Theorem~\ref{thm:discrete_unitarity}).

\begin{figure*}[tbp]
  \includegraphics[width=\textwidth]{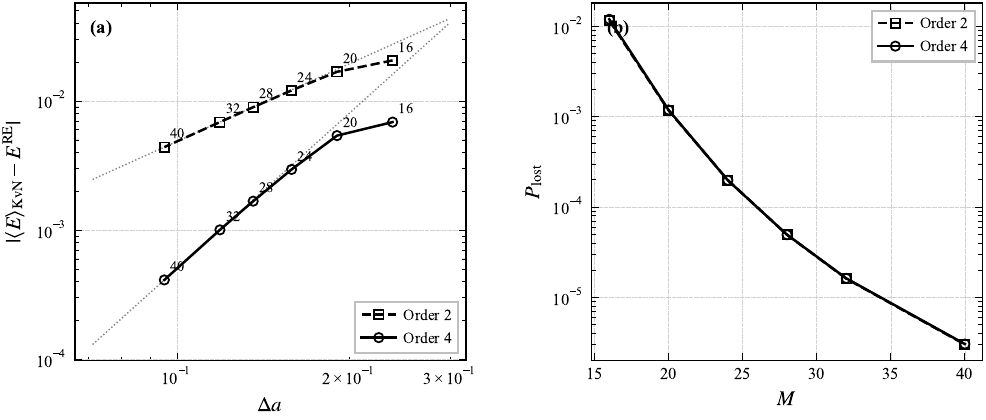}
  \caption{Grid convergence for the NS triad ($\nu = 0.1$, $t = 0.5$,
    $n=6$ domain, $K = M^3$). Numbers at each point indicate $M$.
    (a) Error in $\avg{E} = \avg{\tfrac{1}{2}\sum_j a_j^2/|\vect{k}_j|^2}$
    against Richardson extrapolation from the two finest grids ($M=32,40$);
    fourth-order (solid) and second-order (dashed) SBP operators.
    (b) Cumulative $P_{\mathrm{lost}}$. The conservation residual
    $|P_{\mathrm{ret}} + P_{\mathrm{lost}} - 1| < 10^{-15}$ at every step.}
  \label{fig:convergence}
\end{figure*}

\subsubsection{Domain sizing}
\label{sec:domain_sizing}

Choosing the domain extent $\Omega_a$ involves two competing effects.
Enlarging the box reduces the probability truncated at $t = 0$ but
coarsens the grid spacing $\Delta a$ at fixed $M$, degrading the
resolution of the PDF. In practice the dissipative dynamics contract the
PDF rapidly, leaving resolution error to dominate at any adequately-sized
domain.

For the triad cases, the domain was the cube $[-h, h]^3$ centered at the
origin, with half-width $h = a_{0,\max} + n\sigma$, where $n$ is the
margin parameter and $a_{0,\max} = 1.0$ is the largest component of the
initial mean. This zero-centered construction is natural for Galerkin
modal amplitudes, which can take both positive and negative values. The
mode with the largest initial amplitude ($a_{0,\max} = 1.0$) lies exactly
$n\sigma$ from the right boundary; all other modes have larger margins
($\geq 7.3\sigma$ for $n=6$). The phase-space resolution was
$\sigma/\Delta a = M\sigma/(2h)$.

For the standard case $n = 6$: $h = 1.9$, $\Delta a = 3.8/M$, and
$\sigma/\Delta a = 0.15M/3.8$. For the triad validation cases at $M = 40$,
this gives $\sigma/\Delta a = 1.58$; at $M = 16$ (the coarsest grid in the
convergence study) it gives $\sigma/\Delta a = 0.63$, below the threshold
for resolved initial conditions.

The probability outside the domain is dominated by the single closest
boundary, the right side of mode~1, which lies $6\sigma$ from the initial
mean $\bar{a}_1 = 1.0$. Using the complementary error function,
$P(a_1 > 1.9) = \tfrac{1}{2}\mathrm{erfc}(6/\sqrt{2}) \approx 10^{-9}$.
Contributions from the remaining boundaries are smaller by several orders
of magnitude. The total initial tail probability is therefore
\begin{equation}
P_{\mathrm{tail}}(6) \approx \tfrac{1}{2}\mathrm{erfc}(6/\sqrt{2})
\approx 10^{-9},
\label{eq:ptail}
\end{equation}
where $\mathrm{erf}(x) = (2/\sqrt{\pi})\int_0^x e^{-s^2}\,\dd s$. Since
the dissipative dynamics contract the PDF over time, this sets a strict
upper bound on the probability that can ever reach the boundary.

Figure~\ref{fig:domain_study} reports the domain study for the NS triad
($M = 40$, $t = 0.5$, 200\,000 MC trajectories) over margins from $n=2$
to $n=8$ at fixed grid count. The energy error is dominated by two
distinct mechanisms: at small $n$ the absorbing layer encroaches on the
initial condition and contaminates the retained probability, while at
large $n$ the resolution $\sigma/\Delta a$ falls and finite-stencil
truncation error grows. The result is a non-monotone trend with a
minimum near $n = 5$--$6$, beyond which resolution controls the error
budget. At $n=2$ the NS triad yielded $|\Delta E|/E^{\mathrm{MC}} =
7.8\%$ ($\sigma/\Delta a = 2.31$, sponge inner edge at $0.70\sigma$ from
$\bar{a}_1$); at the minimum ($n = 6$) the error was $0.012\%$
($\sigma/\Delta a = 1.58$); at $n = 8$ it had risen back to $0.07\%$
($\sigma/\Delta a = 1.36$).

The probability absorbed by the Kraus layer fell monotonically with $n$
across the range studied at $M = 40$. At the smallest margin $n = 2$
($h = 1.3$, $\Delta a = 0.065$), the sponge inner edge lay at $1.105$,
only $0.70\sigma$ from $\bar{a}_1 = 1.0$, so the sponge attenuated the
initial Gaussian directly and absorption reached $P_{\mathrm{lost}} =
15\%$. At $n = 3$ the sponge retreated to $1.55\sigma$ from mode~1 and
$P_{\mathrm{lost}}$ fell to $3.2\%$. Further increases gave
$P_{\mathrm{lost}} = 1.5 \times 10^{-4}$ at $n = 5$,
$3.0 \times 10^{-6}$ at $n = 6$,
$2.9 \times 10^{-8}$ at $n = 7$, and
$1.4 \times 10^{-10}$ at $n = 8$. The absence of a minimum within the
range reflects that at $M = 40$ the resolution remained sufficient
($\sigma/\Delta a \geq 1.36$ even at $n = 8$) to keep numerical
diffusion at the boundary below the one-sided Gaussian tail at every
margin.

The practical recommendation is to select $n$ such that
$\sigma/\Delta a \gtrsim 1$ while keeping the sponge well clear of the
initial condition. For the NS triad at $M = 40$, the energy error fell
below $0.5\%$ for $n \geq 4$ and reached its minimum near $n = 5$--$6$;
the absorbed probability fell below $10^{-5}$ for $n \geq 6$. The
$n = 6$ margin used for all validation cases placed the sponge inner
edge at $4.1\sigma$ from mode~1 and gave $\sigma/\Delta a = 1.58$,
chosen to provide clean convergence measurements without boundary
interaction. Smaller margins of $n = 4$ to $5$ give a finer grid
($\sigma/\Delta a \approx 1.7$ to $1.9$ at $M = 40$) at the cost of
larger boundary absorption (Fig.~\ref{fig:domain_study}b). For the
quantum implementation, smaller domains permit finer resolution at
fixed qubit count, since the qubit cost is determined by $M$ through
$n_q = N\lceil\log_2 M\rceil$.

\begin{figure*}[tbp]
  \includegraphics[width=\textwidth]{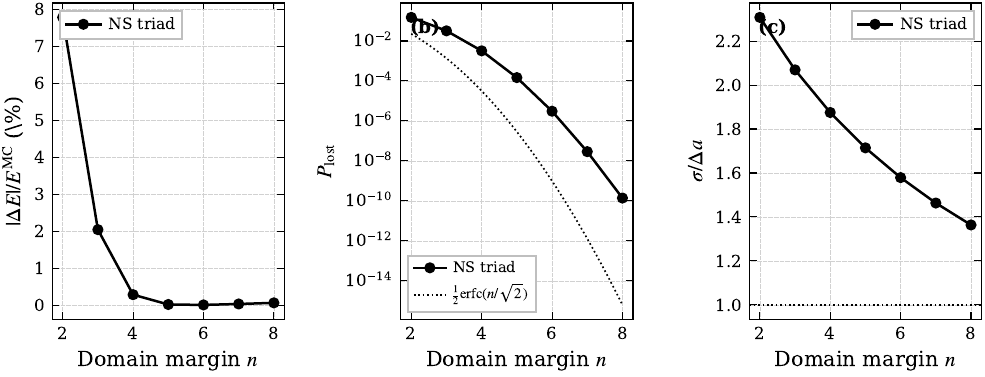}
  \caption{Domain sizing study at fixed grid count for the NS triad
    ($M=40$, $\nu=0.1$, $t=0.5$). The horizontal axis is the margin
    parameter $n$ in $h = a_{0,\max} + n\sigma$.
    (a) Energy error relative to MC: large at small $n$ (sponge contacts
    the initial condition) and at large $n$ (resolution loss), with a
    minimum near $n = 5$--$6$.
    (b) $P_{\mathrm{lost}}$ falls monotonically with $n$. The
    one-sided Gaussian tail~(\ref{eq:ptail}) (dotted) underestimates
    the absorption throughout the range studied, with the gap
    dominated by sponge contact at small $n$ and by residual
    numerical diffusion at large $n$.
    (c) $\sigma/\Delta a$ decreases with domain extent; the dashed
    line marks $\sigma/\Delta a = 1$.}
\label{fig:domain_study}
\end{figure*}

\subsection{Navier--Stokes triad}
\label{sec:ns_results}

The NS triad (Sec.~\ref{sec:ns_model}) was computed at $M = 40$
($K = 64{,}000$) with $\nu = 0.1$ and initial condition
$\bar{\vect{a}} = (1.0, 0.8, 0.5)$.

Figure~\ref{fig:ns_validation} shows KvN with MC ($N_s = 100{,}000$
trajectories) over $t \in [0, 0.5]$. Mean amplitudes agreed to within
$0.1\%$; the modal kinetic energy $\avg{E}$ and the phase-space norm
$\avg{Z}$ each tracked the MC reference to within $0.2\%$,
consistent with the $\sigma/\Delta a = 1.58$ resolution at $M = 40$.
Probability accounting was confirmed in Fig.~\ref{fig:ns_validation}(d):
the conservation residual $|P_{\mathrm{ret}} + P_{\mathrm{lost}} - 1| <
4 \times 10^{-16}$ at every step. The probability loss was
$P_{\mathrm{lost}} = 3.0 \times 10^{-6}$ at $t = 0.5$, consistent with
the numerically inactive Kraus layer of Corollary~\ref{cor:triad_confinement}.
The loss was purely numerical diffusion at $\sigma/\Delta a = 1.58$; no
physical trajectory reaches the boundary for $n = 6$.

\begin{figure*}[tbp]
  \includegraphics[width=0.49\textwidth]{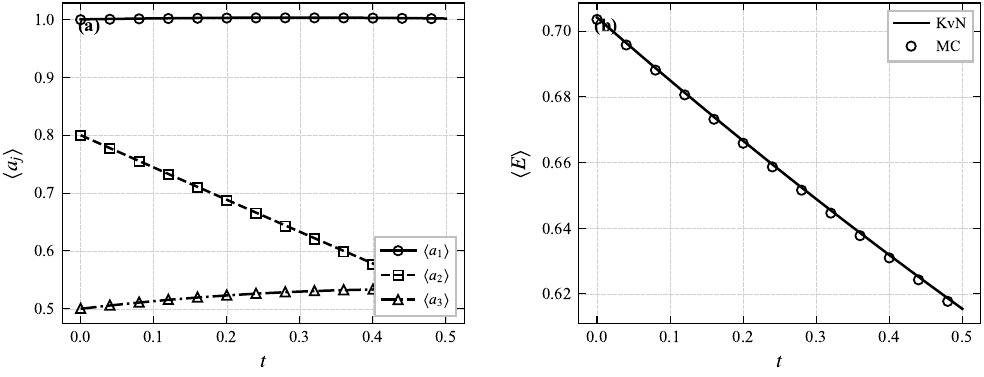}
  \includegraphics[width=0.49\textwidth]{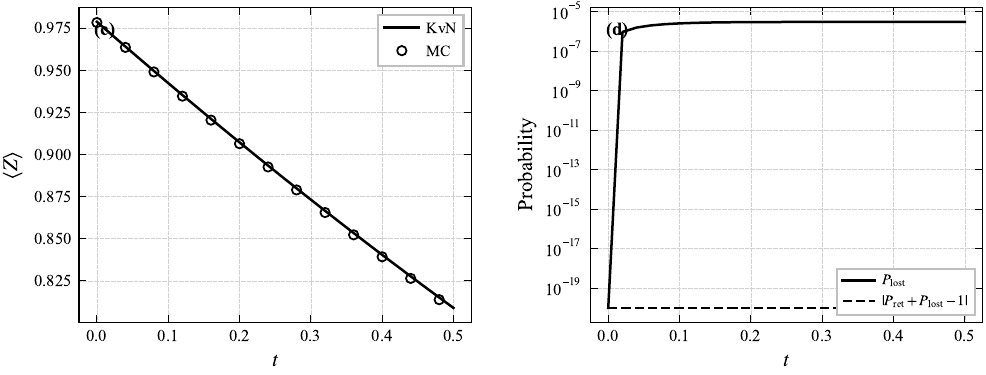}
  \caption{NS triad ($M=40$, $\nu=0.1$, fourth-order SBP).
    (a) Mean amplitudes.
    (b) Modal kinetic energy
    $\avg{E} = \avg{\tfrac{1}{2}\sum_j a_j^2/|\vect{k}_j|^2}$.
    (c) Squared amplitude norm
    $\avg{Z} = \avg{\tfrac{1}{2}\norm{\vect{a}}_2^2}$.
    (d) $P_{\mathrm{lost}}$ (solid) and conservation residual
    $|P_{\mathrm{ret}} + P_{\mathrm{lost}} - 1|$ (dashed).
    Lines: KvN; symbols: MC ($N_s = 100{,}000$ trajectories).}
\label{fig:ns_validation}
\end{figure*}

To show the qualitative PDF behavior beyond the benchmarking
window\label{sec:pdf_evolution}, the same triad was integrated to
$t = 1.5$ at a coarser resolution $M = 28$, chosen to keep the full
three-dimensional grid tractable for visualization over the extended
horizon. Figure~\ref{fig:pdf_evolution} shows the marginal density in
the $(a_1, a_2)$ plane, integrated over $a_3$, with $\Delta t = 0.02$.
The initial Gaussian developed multiple peaks by $t \approx 0.9$,
producing features finer than the grid spacing at late times. This
sub-grid structure is the dominant source of error identified in
Sec.~\ref{sec:convergence}. The finite-order SBP approximation smooths
features below $\Delta a$, introducing dispersion errors in the moments
that diminish with grid refinement.

The probability loss at $M = 28$ increased from $4.9 \times 10^{-5}$ at
$t = 0.5$ to $6.5 \times 10^{-5}$ at $t = 1.5$ (compare $3.0 \times
10^{-6}$ at $M = 40$ in Table~\ref{tab:summary}). At late times, nonlinear
mode coupling created additional probability peaks that extended into the
domain margin. The encroachment occurred not because the norm bound was violated, but
because the PDF developed structure at individual component values near the
box boundary while $\norm{\vect{a}}_2$ remained bounded. Without the Kraus
layer, the periodic SBP operator would reintroduce this probability at the
opposite boundary as a spurious wrap-around artifact.

\begin{figure}[tbp]
  \includegraphics[width=\columnwidth]{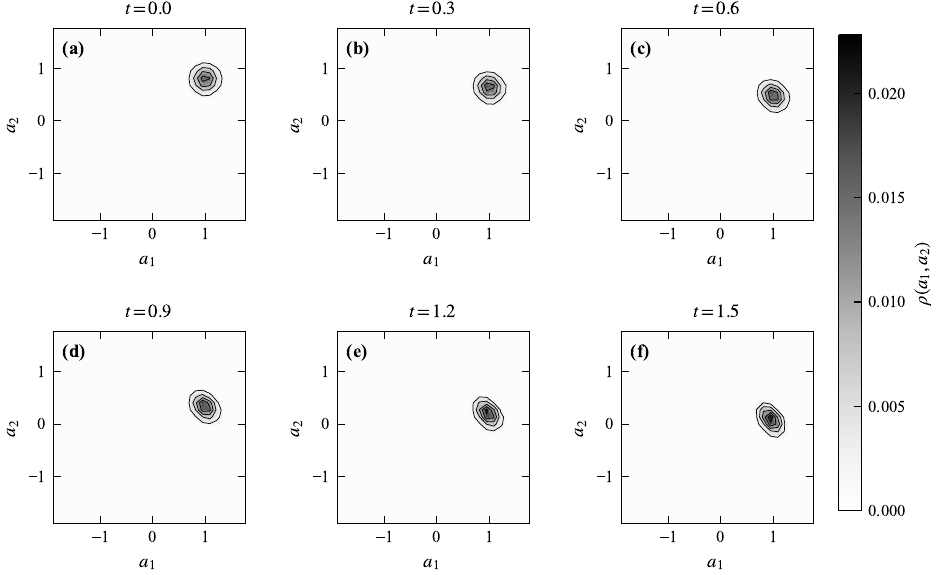}
  \caption{PDF evolution for the NS triad ($M=28$, $\nu=0.1$, fourth-order
    SBP): marginal density in the $(a_1, a_2)$ plane, integrated over
    $a_3$. The initial Gaussian develops multiple peaks through nonlinear
    energy transfer. $P_{\mathrm{lost}} = 6.5 \times 10^{-5}$ at
    $t = 1.5$.}
\label{fig:pdf_evolution}
\end{figure}

\subsection{Incompressible Euler triad}
\label{sec:euler_results}

The Euler triad (Sec.~\ref{sec:euler_model}) was computed at $M = 40$
($K = 64{,}000$) with initial condition $\bar{\vect{a}} = (1.0, 0.8, 0.5)$
and $6\sigma$ margins, over $t \in [0, 1.0]$, with the Kraus absorbing
layer applied as in the dissipative cases.

Figure~\ref{fig:euler_triad} shows the Euler triad dynamics with no
viscous decay. The mean amplitudes redistributed energy among modes.
$\avg{a_1}$ grew from $1.0$ to $1.10$,
$\avg{a_2}$ fell from $0.8$ to $0.27$, and $\avg{a_3}$ rose from $0.5$
to $0.78$, with $\sum_j a_j^2$ preserved per trajectory by the
inviscid dynamics. Both KvN and MC maintained flat $\avg{E}$ and
$\avg{Z}$ on the $\pm 0.7\%$ scale shown in panels~(b)--(c). The KvN
energy decreased by $2.0 \times 10^{-4}$ ($0.028\%$) over
$t \in [0, 1]$, consistent with the $O(\Delta a^4)$ spatial truncation
error at $\sigma/\Delta a = 1.58$ and indistinguishable from MC
sampling scatter ($\mathrm{SE}(E) \approx 5.2 \times 10^{-4}$). The
probability loss was $P_{\mathrm{lost}} = 1.7 \times 10^{-5}$ at
$t = 1.0$, slightly larger than for the dissipative triads because the
absence of damping permits more extensive mode redistribution toward
the boundary; the absorbed mass nonetheless remained negligible
relative to the resolution error. The conservation residual
$|P_{\mathrm{ret}} + P_{\mathrm{lost}} - 1| < 2.3 \times 10^{-16}$
throughout, confirming exact probability accounting via the CPTP
completeness relation.

\begin{figure*}[tbp]
  \includegraphics[width=0.49\textwidth]{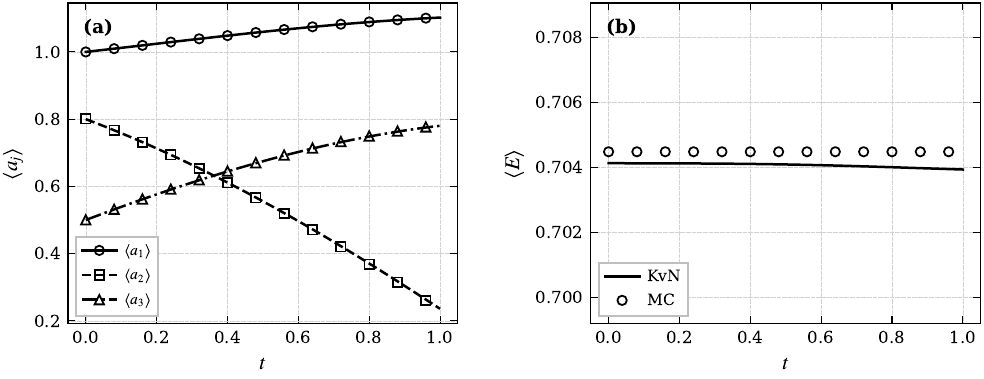}
  \includegraphics[width=0.49\textwidth]{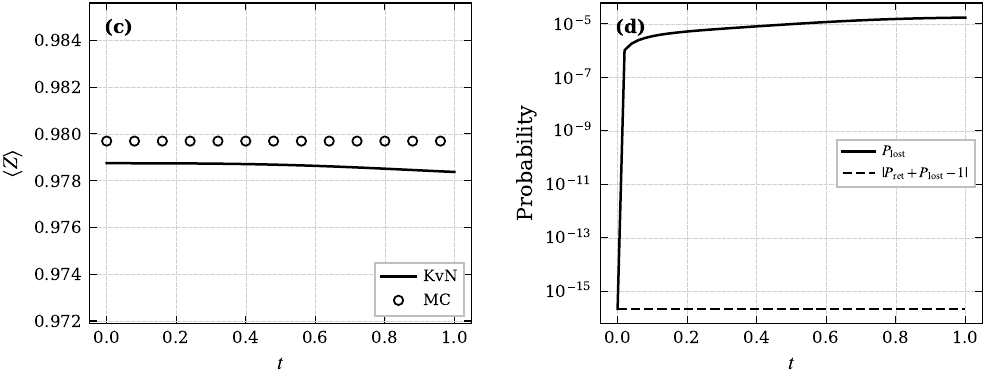}
  \caption{Euler triad ($M=40$, $\nu=0$, fourth-order SBP, Kraus layer
    applied).
    (a) Mean amplitudes. (b) Modal kinetic energy on a $\pm 0.7\%$ scale
    about the initial expected value $E_0 \equiv \avg{E}|_{t=0} = 0.7041$.
    (c) Squared amplitude norm
    $Z = \tfrac{1}{2}\norm{\vect{a}}_2^2$ on a $\pm 0.7\%$ scale about
    $Z_0 \equiv \avg{Z}|_{t=0} = 0.9788$.
    (d) $P_{\mathrm{lost}}$ (solid) and conservation residual
    $|P_{\mathrm{ret}} + P_{\mathrm{lost}} - 1|$ (dashed).
    The $0.028\%$ KvN energy drift over $t \in [0,1]$ is consistent
    with $O(\Delta a^4)$ spatial truncation and lies within the MC
    standard error.
    Lines: KvN; symbols: MC ($N_s = 100{,}000$ trajectories).}
\label{fig:euler_triad}
\end{figure*}

\subsection{Hasegawa--Mima triad}
\label{sec:hm_results}

The HM triad (Sec.~\ref{sec:hm_model}) was computed at the same resolution
and domain margins as the NS triad ($M = 40$, $K = 64{,}000$, $\mu = 0.1$,
$\bar{\vect{b}} = (1.0, 0.8, 0.5)$).

Figure~\ref{fig:hm_triad} compares KvN with MC ($N_s = 100{,}000$
trajectories) over $t \in [0, 0.5]$. The weaker HM coupling
($|C_j^{\mathrm{HM}}| \leq 1/3$ versus $|C_j| \leq 0.8$) produced slower
energy transfer among modes. Agreement with MC and conservation residuals
matched the NS case (Table~\ref{tab:summary}). The Kraus layer was numerically
inactive, with $P_{\mathrm{lost}} = 2.6 \times 10^{-6}$ at $t = 0.5$, slightly
below the NS value ($3.0 \times 10^{-6}$), consistent with the weaker
nonlinear coupling producing slower PDF spreading toward the boundary.

\begin{figure*}[tbp]
  \includegraphics[width=0.49\textwidth]{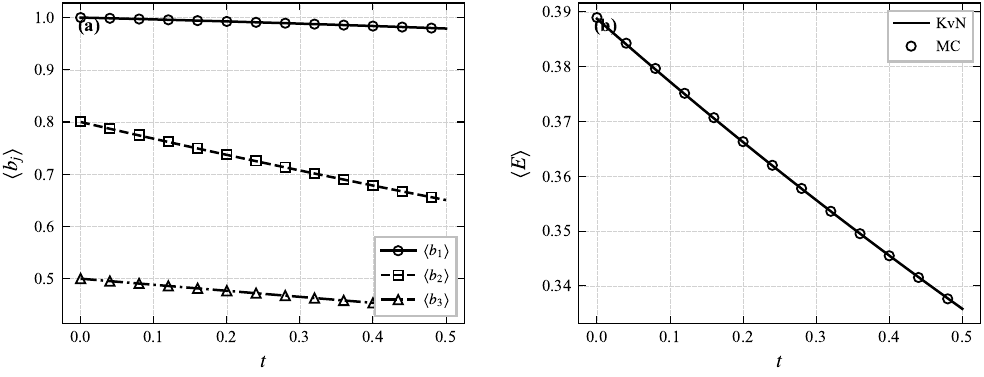}
  \includegraphics[width=0.49\textwidth]{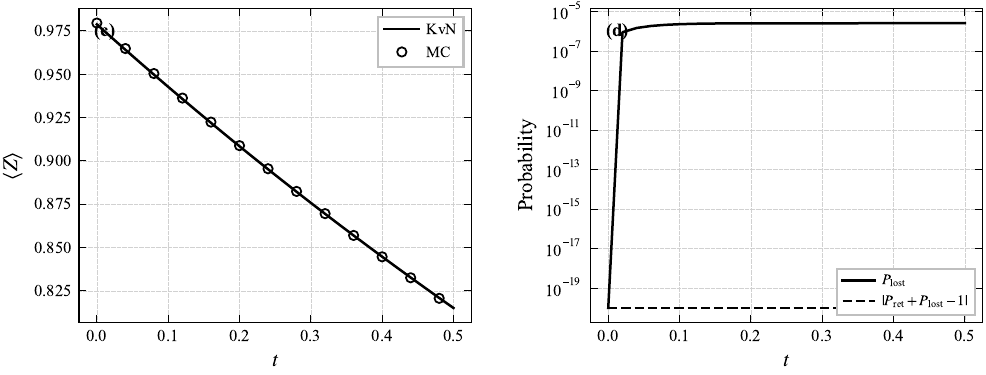}
  \caption{Hasegawa--Mima triad ($M=40$, $\mu=0.1$, fourth-order SBP).
    (a) Mean amplitudes.
    (b) Energy $\avg{E} = \avg{\tfrac{1}{2}\sum_j b_j^2/(|\vect{k}_j|^2+1)}$.
    (c) Squared amplitude norm
    $\avg{Z} = \avg{\tfrac{1}{2}\norm{\vect{b}}_2^2}$.
    (d) $P_{\mathrm{lost}}$ and conservation residual.
    Lines: KvN; symbols: MC ($N_s = 100{,}000$ trajectories).}
\label{fig:hm_triad}
\end{figure*}

\section{Implications for quantum implementation}
\label{sec:quantum}

This section sets out what the classical results of Secs.~\ref{sec:theory}
and~\ref{sec:results} imply for a quantum implementation of the KvN
evolution. No quantum algorithm was executed here. The validation studies
of Sec.~\ref{sec:results} were obtained on a classical CPU. The purpose was to
identify the algebraic and structural preconditions that would allow a
quantum implementation to deliver the asymptotic advantages for which KvN
is motivated, to estimate the resource counts that follow from the
specific discretization developed in this paper, and to position the
approach relative to alternative quantum strategies for classical
dynamics. A full gate-count analysis, circuit synthesis for the
state-dependent diagonal operators, and quantum-hardware execution are
left to future work.

The classical amplitudes $\psi_{\vect{i}}$ defined in
Sec.~\ref{sec:sbp} are amplitude-encoded on a quantum register of
$n_q = N\lceil\log_2 M\rceil$ qubits as
\begin{equation}
|\psi\rangle_S = \sum_{\vect{i}}\psi_{\vect{i}}|\vect{i}\rangle_S,
\label{eq:amplitude_encoding}
\end{equation}
with $|\vect{i}\rangle_S$ the computational-basis state corresponding to
the multi-index $\vect{i}$ of grid coordinates.

The algebraic structures established in
Secs.~\ref{sec:theory}--\ref{sec:results} serve as these preconditions:
exact anti-Hermiticity of the generator ensures that the propagator
remains unitary over the $O(\varepsilon^{-1})$ Grover iterates required
by amplitude estimation, and the CPTP boundary treatment provides exact
probability accounting with $O(1)$ ancilla overhead. The Kraus channel
of Sec.~\ref{sec:kraus} is non-unitary on the system Hilbert space
$\mathcal{H}_S = L^2(\Omega_a, \mathbb{C})$ and is realized on quantum
hardware through Stinespring dilation~\cite{Stinespring1955}. The
dilation embeds the channel as a unitary operation on the extended
Hilbert space $\mathcal{H}_S \otimes \mathcal{H}_E$, where
$\mathcal{H}_E$ is a single ancilla qubit (the environment register,
dimension 2) initialized in $|0\rangle_E$. The isometry
$\mat{V}: \mathcal{H}_S \to \mathcal{H}_S \otimes \mathcal{H}_E$
takes the explicit form
\begin{equation}
\mat{V} = \mat{K}_1 \otimes |0\rangle_E + \mat{K}_2 \otimes |1\rangle_E,
\label{eq:isometry_explicit}
\end{equation}
so that its action on any system state $|\psi\rangle_S$ gives
\begin{equation}
\mat{V}|\psi\rangle_S = \mat{K}_1|\psi\rangle_S \otimes |0\rangle_E
+ \mat{K}_2|\psi\rangle_S \otimes |1\rangle_E.
\label{eq:isometry}
\end{equation}
Measuring the ancilla in $|0\rangle_E$ yields the post-absorption state
$\mat{K}_1|\psi\rangle_S$, while outcome $|1\rangle_E$ registers the
absorbed fraction $\mat{K}_2|\psi\rangle_S$. The isometry property
$\mat{V}^\dagger \mat{V} = \mat{I}_S$ follows directly from the
completeness relation~(\ref{eq:cptp}):
\begin{align}
\mat{V}^\dagger \mat{V}
&= \bigl(\mat{K}_1^\dagger \otimes \langle 0|_E
       + \mat{K}_2^\dagger \otimes \langle 1|_E\bigr) \nonumber\\
&\quad\times \bigl(\mat{K}_1 \otimes |0\rangle_E
       + \mat{K}_2 \otimes |1\rangle_E\bigr) \nonumber\\
&= \mat{K}_1^\dagger\mat{K}_1\langle 0|0\rangle_E
 + \mat{K}_2^\dagger\mat{K}_2\langle 1|1\rangle_E \nonumber\\
&= \mat{K}_1^\dagger\mat{K}_1 + \mat{K}_2^\dagger\mat{K}_2
 = \mat{I}_S.
\label{eq:isometry_proof}
\end{align}
At each grid point $\vect{i}$, the ancilla undergoes a controlled $R_y$
rotation by angle $\theta_{\vect{i}} = \arccos(\sqrt{W_{\vect{i}}})$,
with post-selection on $|0\rangle_E$ retaining the system state and
$|1\rangle_E$ flagging absorption:
\begin{equation}
|i\rangle_S|0\rangle_E \;\longrightarrow\;
|i\rangle_S\bigl(\sqrt{W_{\vect{i}}}\,|0\rangle_E
+ \sqrt{1-W_{\vect{i}}}\,|1\rangle_E\bigr).
\label{eq:ancilla_rotation}
\end{equation}
Post-selection on $|0\rangle_E$ retains the evolved state with probability
$\norm{\mat{K}_1\psi}^2 \equiv 1 - \Delta P_{\mathrm{lost}}$, where
$\Delta P_{\mathrm{lost}}$ is the probability absorbed in a single step.
The total qubit count is $N\lceil\log_2 M\rceil + 1$, where the $+1$ accounts
for the single ancilla qubit. The gate complexity of loading the spatially
varying rotation angles $\theta_{\vect{i}}$ into the circuit depends on the
structure of $W_{\vect{i}}$ and on the available arithmetic synthesis
techniques. For the separable cosine taper~(\ref{eq:taper}), the weights
factorize across coordinate directions, which may reduce circuit depth
relative to a general $K$-point loading problem.

Realizing the propagator $\mat{U} = \exp(\mat{L}\Delta t)$ on quantum
hardware uses Hamiltonian simulation
algorithms~\cite{Berry2015,Low2017} applied to the Hermitian operator
$\mat{H} = -i\mat{L}$. The gate count is polylogarithmic in
$K = M^N$ at fixed sparsity but grows with the spectral norm
$\norm{\mat{L}}$, which scales as $\max|f_j|/\Delta a$ and increases
with grid refinement at fixed physical parameters. The state-dependent
diagonal operators $\mat{F}_j$ require controlled arithmetic circuits
whose cost scales polynomially with the polynomial degree of
$f_j(\vect{a})$. For the quadratic Navier--Stokes nonlinearity, the
circuit reduces to controlled multiply-and-add operations within
established synthesis techniques~\cite{Haner2018}. The generator
$\mat{L}$ has at most $4N$ nonzeros per row for the fourth-order
operator, giving 12 nonzeros per row for the three-mode triad
(Appendix~\ref{app:sbp_sparsity}). For $M = 40$ ($K = 64{,}000$), the
sparse $\mat{L}$ requires $12K \approx 768{,}000$ stored doubles
($\approx 6$\,MB), compared to $K^2 \approx 4.1 \times 10^9$ doubles
($\approx 33$\,GB) for a dense representation. The Krylov method
exploits this sparsity to advance the wavefunction in $O(K)$ time per
step with constant memory. The quantum advantage therefore becomes
asymptotically accessible as $M$ increases, where classical memory for
the dense propagator becomes prohibitive while the quantum state
requires only $N\lceil\log_2 M\rceil = 3 \times 6 = 18$ qubits for the
three-mode triads, with $\lceil\log_2 40\rceil = 6$ qubits per mode. For the
domains used here ($n = 6$, $M = 40$), the per-step absorption is
$\Delta P_{\mathrm{lost}} \sim 3.0 \times 10^{-6}/25 \approx
1.2 \times 10^{-7}$ for the NS triad, so the post-selection success
probability exceeds $1 - 1.2 \times 10^{-7}$ per step and is
effectively unity.

\subsection{Quantum advantage for distributional queries}
\label{sec:amplitude_estimation}

Once the evolved wavefunction $|\psi(t)\rangle_S$ is prepared on the
quantum register, distributional quantities can be extracted via quantum
amplitude estimation~\cite{Brassard2002}. These take the form
\begin{equation}
P(A) = \langle\psi(t)|\Pi_A|\psi(t)\rangle
= \sum_{\vect{i} \in A} |\psi_{\vect{i}}(t)|^2 \Delta V,
\label{eq:prob_query}
\end{equation}
where $\Pi_A = \sum_{\vect{i}\in A}|\vect{i}\rangle\langle\vect{i}|$ is
the projector onto the subset $A$ of the phase-space grid, representing
the probability that the system state lies in region $A$. Classical Monte
Carlo estimation of $P(A)$ requires $O(\varepsilon^{-2})$ samples to
achieve absolute accuracy $\varepsilon$. Quantum amplitude estimation
reduces this to $O(\varepsilon^{-1})$ queries to a controlled version of
the propagator, a quadratic speedup applicable to any probability or
expectation value computed from the evolved PDF.

The unitarity required for this speedup is the
$\mat{L} + \mat{L}^{\mathrm{T}} = 0$ identity of
Theorem~\ref{thm:discrete_unitarity}, which holds independently of grid
resolution and the divergence structure of the velocity field.
Classical validation at small $N$ therefore serves as a proof of
concept for quantum implementations at large $N$, where the exponential
compression of $M^N$ amplitudes into $N\lceil\log_2 M\rceil$ qubits is
the advantage.

The $O(\varepsilon^{-1})$ scaling is a Heisenberg-limited bound on the
precision achievable per unit of quantum resource. Heisenberg
uncertainty is absent from
grid design because the KvN wavefunction $\psi(\vect{a})$ is defined on
the $N$-dimensional modal amplitude space, on which the multiplication
operators $\hat{a}_j:\psi \mapsto a_j\psi$ commute and no conjugate
momenta are introduced; grid design is therefore governed by classical
resolution considerations, as confirmed by the convergence and
domain-sizing studies of Secs.~\ref{sec:convergence}
and~\ref{sec:domain_sizing}. The uncertainty principle re-enters at the
measurement stage and sets the fundamental limit on how much precision
can be extracted from coherent propagator applications.

The formulation is agnostic to the initial distribution
$\rho(\vect{a},0)$. Nothing in the KvN embedding, the Weyl-ordered
generator, the SBP discretization, or the Kraus layer is specific to
Gaussian initial conditions, and the quantum encoding
$\psi_{\vect{i}}(0) = \sqrt{\rho(\vect{a}_{\vect{i}},0)\,\Delta V}$
applies uniformly. State preparation costs depend on the distribution
class: uniform distributions prepare in $O(n_q)$ single-qubit gates;
efficiently integrable distributions (including Gaussians and many
physically motivated families) admit polynomial state preparation
through the Grover--Rudolph algorithm~\cite{Grover2002}; arbitrary
distributions require $O(2^{n_q})$ gates in the worst case and would
negate the exponential compression of the KvN encoding. The Gaussian
initial conditions used in Sec.~\ref{sec:results} belong to the
efficiently integrable class, so state preparation is not a bottleneck
for the applications considered here.

\subsection{Comparison with alternative quantum approaches}
\label{sec:comparison}

Both LCHS and Schr\"{o}dingerization apply to arbitrary non-unitary
linear PDE evolution. For the specific class
of Liouville problems studied here, the non-unitarity they address is
an artifact of the discretization rather than a feature of the
underlying physics. The difference lies in where the non-unitarity is
located and how it is handled.

In the LCHS framework, Novikau and Joseph~\cite{Novikau2025,Novikau2025b}
use upwind or centered finite differences for the KvN generator, producing
a non-Hermitian matrix that is decomposed as a linear combination of
Hamiltonian evolutions. This decomposition requires an auxiliary register
of $O(\log(1/\varepsilon))$ qubits and quantum signal processing circuits
whose depth scales with the spectral norm of the anti-Hermitian and
Hermitian parts separately. The approach is appropriate when the
discretization itself introduces non-unitarity.

\begin{sloppypar}
Schr\"{o}\-din\-ger\-iz\-a\-tion~\cite{Jin2023,Jin2024} lifts an arbitrary
$D$-dimensional linear PDE into a $(D\!+\!1)$-dimensional
Schr\"{o}dinger equation via a warped phase-space transformation. The
extra continuous dimension adds $O(\log(1/\varepsilon))$ qubits.
\end{sloppypar}

The present approach restricts to equations whose dynamics can be cast
as Liouville transport of a probability density. The Weyl-ordered
periodic SBP discretization is exactly anti-Hermitian
(Theorem~\ref{thm:discrete_unitarity}), so the interior propagator is
already unitary and neither the LCHS decomposition nor the
Schr\"{o}dingerization lift is required. The only non-unitary element
is the Kraus absorbing layer, which requires a single ancilla qubit;
the total qubit count is $N\lceil\log_2 M\rceil + 1$.

\section{Conclusions}
\label{sec:conclusions}

A discretization method for the Koopman--von Neumann formulation of
spectrally truncated fluid and plasma dynamics has been developed and
validated. The method yields exactly unitary operators through SBP
discretization of the Weyl-ordered generator.

The Weyl ordering was derived as the unique symmetric operator ordering
within the one-parameter family of weighted orderings that makes the
KvN generator anti-Hermitian in the complex Hilbert space over the
modal amplitude domain. Anti-Hermiticity requires real velocity fields,
a condition guaranteed by Galerkin projection of any real PDE onto real
basis functions.

Exact discrete unitarity was proved as a purely algebraic identity for
the Weyl-ordered periodic SBP generator
(Theorem~\ref{thm:discrete_unitarity}). The proof requires only that
the differentiation matrix be skew-symmetric and the velocity-field
matrix be real diagonal, and holds for any SBP stencil order, grid
resolution, velocity field, and divergence structure.

A split-step Kraus absorbing layer was constructed as a CPTP quantum
channel providing exact probability accounting at domain boundaries via
a Stinespring dilation requiring a single ancilla qubit. Two structural
properties of the layer carry independent value even when classical
dynamics confine trajectories away from the boundary: the CPTP
completeness relation enforces probability accounting at machine
precision, and the absorbing taper prevents the periodic SBP operator
from reintroducing PDF tails at the opposite boundary as a spurious
wrap-around artifact.

Validation on three equations spanning dissipative and Hamiltonian
regimes confirmed machine-precision interior unitarity and fourth-order
grid convergence. At the $n = 6$ domain margin, boundary absorption
was $\sim 3 \times 10^{-6}$ for the dissipative triads (NS, HM) and
$\sim 2 \times 10^{-5}$ for the Hamiltonian Euler triad, in all cases
negligible relative to the resolution error. The conservation residual
$|P_{\mathrm{ret}} + P_{\mathrm{lost}} - 1|$ was maintained below
$10^{-15}$ throughout all three cases, confirming exact probability
accounting via the CPTP completeness relation.

The guarantees of unitarity, probability conservation, and boundary
accounting follow from algebraic identities, not from numerical tuning.
The same approach extends to any Liouville system whose dynamics can be
cast in the Galerkin form studied here. Demonstration of an actively
absorbing Kraus layer on a system with $T \not\equiv 0$, such as a
forced truncation or a non-resonant mode set that breaks the cyclic
telescoping identity, is a natural extension of the present work and
is left to future investigation. The algebraic structures
established in this paper are the preconditions for a quantum
implementation of the KvN evolution; a complete gate-count analysis,
circuit synthesis for the state-dependent diagonal operators
$\mat{F}_j$, and quantum-hardware execution of the resulting propagator
are left to future work.


\appendix

\section{Periodic SBP first-derivative operators}
\label{app:sbp}

This appendix defines the two periodic SBP operators used throughout the
paper and establishes the properties they supply to the main results.
Appendix~\ref{app:sbp2} and~\ref{app:sbp_skewsym} give the
second-order operator; Appendix~\ref{app:sbp4} and~\ref{app:sbp_skewsym}
give the fourth-order operator.
Appendix~\ref{app:sbp_skewsym} verifies exact skew-symmetry
$\mat{D}^{(p)} + (\mat{D}^{(p)})^\mathrm{T} = 0$ for both, which is the
condition $\mat{D}_j^\dagger = -\mat{D}_j$ required by
Theorem~\ref{thm:discrete_unitarity}. The truncation errors stated in
Appendices~\ref{app:sbp2} and~\ref{app:sbp4} define the orders $p = 2$ and
$p = 4$ confirmed empirically in Sec.~\ref{sec:convergence}.
Appendix~\ref{app:sbp_sparsity} provides the nonzero counts cited in
Sec.~\ref{sec:quantum}.

On a uniform periodic grid with $M$ points and spacing $\Delta a$, a
diagonal-norm SBP first-derivative operator is defined by
\begin{equation}
\mat{D} = \mat{H}^{-1}\mat{Q}, \quad
\mat{Q} + \mat{Q}^{\mathrm{T}} = \mat{B},
\label{eq:sbp_def}
\end{equation}
where $\mat{H} = \Delta a\,\mat{I}$ is the norm matrix and $\mat{B}$
encodes boundary contributions~\cite{Kreiss1974,Strand1994,Svard2014}.
On periodic domains $\mat{B} = 0$, so the SBP property reduces to exact
skew-symmetry of $\mat{Q}$. The periodic SBP operator $\mat{D}$ is a real
circulant matrix of size $M \times M$, fully determined by its first row:
\begin{equation}
D_{ij} = w_{j-i \bmod M},
\label{eq:circulant_entry}
\end{equation}
where the weights $w_k$ are real. Exact skew-symmetry requires $w_{-k} =
-w_k$ for all $k$ (indices mod $M$); the explicit verification for the
specific weights defined in the subsections below is given in
Appendix~\ref{app:sbp_skewsym}.

\subsection{Second-order operator}
\label{app:sbp2}

The second-order periodic SBP operator has two nonzero weights per row:
\begin{equation}
w_{\pm 1} = \pm\frac{1}{2\Delta a}, \qquad
w_k = 0 \;\text{ for }\; \abs{k} \neq 1.
\label{eq:sbp2_weights}
\end{equation}
For $M = 6$:
\begin{equation}
\mat{D}^{(2)} = \frac{1}{2\Delta a}
\begin{pmatrix}
 0 &  1 &  0 &  0 &  0 & -1 \\
-1 &  0 &  1 &  0 &  0 &  0 \\
 0 & -1 &  0 &  1 &  0 &  0 \\
 0 &  0 & -1 &  0 &  1 &  0 \\
 0 &  0 &  0 & -1 &  0 &  1 \\
 1 &  0 &  0 &  0 & -1 &  0 \\
\end{pmatrix}.
\label{eq:sbp2_matrix}
\end{equation}
Skew-symmetry $\mat{D}^{(2)} + (\mat{D}^{(2)})^{\mathrm{T}} = 0$ is
manifest. The truncation error is, for any smooth function $g$:
\begin{equation}
(\mat{D}^{(2)}\vect{g})_i = g'(a_i)
+ \frac{(\Delta a)^2}{6}\,g'''(a_i) + O\!\left((\Delta a)^4\right).
\label{eq:sbp2_truncation}
\end{equation}

\subsection{Fourth-order operator}
\label{app:sbp4}

The fourth-order periodic SBP operator has four nonzero weights per row:
\begin{align}
w_{\pm 1} &= \pm\frac{2}{3\Delta a}, \qquad
w_{\pm 2} = \mp\frac{1}{12\Delta a}, \nonumber\\
w_k &= 0 \quad \text{for} \quad \abs{k} > 2.
\label{eq:sbp4_weights}
\end{align}
For $M = 6$:
\begin{equation}
\mat{D}^{(4)} = \frac{1}{12\Delta a}
\begin{pmatrix}
 0 &  8 & -1 &  0 &  1 & -8 \\
-8 &  0 &  8 & -1 &  0 &  1 \\
 1 & -8 &  0 &  8 & -1 &  0 \\
 0 &  1 & -8 &  0 &  8 & -1 \\
-1 &  0 &  1 & -8 &  0 &  8 \\
 8 & -1 &  0 &  1 & -8 &  0 \\
\end{pmatrix}.
\label{eq:sbp4_matrix}
\end{equation}
Skew-symmetry $\mat{D}^{(4)} + (\mat{D}^{(4)})^{\mathrm{T}} = 0$ is
again manifest. The truncation error is, for any smooth function $g$:
\begin{equation}
(\mat{D}^{(4)}\vect{g})_i = g'(a_i)
- \tfrac{(\Delta a)^4}{30}\,g^{(5)}(a_i)
+ O\!\left((\Delta a)^6\right).
\label{eq:sbp4_truncation}
\end{equation}

\subsection{Verification of exact skew-symmetry}
\label{app:sbp_skewsym}

This subsection verifies the condition $\mat{D}_j^\dagger = -\mat{D}_j$
required by Theorem~\ref{thm:discrete_unitarity} for the specific
operators defined in Appendices~\ref{app:sbp2} and~\ref{app:sbp4}.
The general circulant structure of Eq.~(\ref{eq:circulant_entry}) reduces
exact skew-symmetry to the antisymmetry of the weight sequence
$w_{-k} = -w_k$. Setting $\mat{Q}^{(p)} = \Delta a\,\mat{D}^{(p)}$ for
order $p \in \{2,4\}$ and reading off the weights:
\begin{itemize}
\item Second order: $w_1 = 1/(2\Delta a)$, $w_{-1} = -1/(2\Delta a) = -w_1$.
\item Fourth order: $w_1 = 2/(3\Delta a)$, $w_{-1} = -w_1$; $w_2 = -1/(12\Delta a)$, $w_{-2} = -w_2$.
\end{itemize}
The entry-level verification then gives
\begin{align}
Q^{(p)}_{ij} + Q^{(p)}_{ji}
= \Delta a\,(w_{j-i} + w_{-(j-i)}) = 0,
\label{eq:skewsym_verify}
\end{align}
for both operators, since $w_{-k} = -w_k$ in each case. The anti-Hermiticity
$\mat{D}^{(p)} + (\mat{D}^{(p)})^{\mathrm{T}} = 0$ then follows
immediately, providing the discrete counterpart
$\mat{D}_j^\dagger = -\mat{D}_j$ required by Eq.~(\ref{eq:D_antiherm})
and the proof of Theorem~\ref{thm:discrete_unitarity}.

\subsection{Sparsity and qubit cost}
\label{app:sbp_sparsity}

Both operators are sparse: $\mat{D}^{(2)}$ has 2 nonzeros per row and
$\mat{D}^{(4)}$ has 4 nonzeros per row. On the full $N$-dimensional
tensor-product grid of size $K = M^N$, the Weyl-ordered
generator~(\ref{eq:kvn_discrete}) inherits this sparsity via the
Kronecker structure~(\ref{eq:Dj_kronecker}). Each term
$\mat{F}_j\mat{D}_j + \mat{D}_j\mat{F}_j$ has the same sparsity pattern
as $\mat{D}_j$ itself, contributing $2s$ nonzeros per row, where
$s \in \{1,2\}$ is the stencil half-width. Since the Kronecker structures
of distinct coordinate directions are disjoint, summing over $N$ directions
gives at most $2sN$ nonzeros per row. For the three-mode triad ($N=3$)
with the fourth-order operator ($s=2$), this gives 12 nonzeros per row
out of $K = 64{,}000$ columns, confirming the sparsity that underpins the
polylogarithmic gate complexity of Hamiltonian simulation.

\section{Parent PDEs and Galerkin truncations}
\label{app:parent_pdes}

This appendix presents the parent partial differential equations and their
Galerkin truncations, making the structural origin of the coupling
coefficients explicit. The main result is the telescoping identity
$\sum_j C_j = 0$ for resonant triads derived from incompressible equations
(Appendices~\ref{app:ns} and~\ref{app:hm}), which is the condition required
by Corollary~\ref{cor:triad_confinement} to keep trajectories bounded
and the Kraus layer numerically inactive on all three test cases.

\noindent\textit{Notation.} Throughout this appendix, hats denote Fourier
coefficients, not operators. Operator hats are used exclusively in the
main text.

\subsection{Incompressible Navier--Stokes and Euler}
\label{app:ns}

The two-dimensional incompressible Navier--Stokes equations in
vorticity--streamfunction form are
\begin{equation}
\frac{\partial\omega}{\partial t} + J(\Psi, \omega)
= \nu\,\nabla^2 \omega, \qquad \omega = -\nabla^2 \Psi,
\label{eq:vorticity_ns}
\end{equation}
where $\omega$ is the vorticity, $\Psi$ is the streamfunction, and
$J(\Psi,\omega) = \Psi_x \omega_y - \Psi_y \omega_x$ is the Jacobian.
Incompressibility guarantees a vector potential; the relation
$\omega = -\nabla^2\Psi$ is the kinematic definition of vorticity.
The equations are posed on the doubly-periodic square
$\vect{x} \in [0, 2\pi]^2$ with periodic boundary conditions in both
spatial directions, admitting the Fourier expansion
$\omega(\vect{x},t) = \sum_{\vect{k}} \hat{\omega}_{\vect{k}}(t)\,
e^{\ii\vect{k}\cdot\vect{x}}$ over integer wavevectors $\vect{k} \in
\mathbb{Z}^2$. In Fourier space,
$\hat{\Psi}_{\vect{k}} = \hat{\omega}_{\vect{k}}/|\vect{k}|^2$.

For a resonant triad satisfying $\vect{k}_1 + \vect{k}_2 +
\vect{k}_3 = \vect{0}$, writing $\omega = \sum_{j=1}^{3} a_j(t)\,
e^{\ii\vect{k}_j \cdot \vect{x}} + \mathrm{c.c.}$\ (complex conjugate)
and projecting onto each wavevector gives~(\ref{eq:ns_triad}) with
\begin{equation}
C_j = (\vect{k}_m \times \vect{k}_n)\!\left(\frac{1}{|\vect{k}_m|^2}
- \frac{1}{|\vect{k}_n|^2}\right)\!, \quad (j,m,n)\;\text{cyclic}.
\label{eq:ns_coupling}
\end{equation}
The resonance condition implies that all three cross products are equal.
In two dimensions, $\vect{k}_m \times \vect{k}_n \equiv
k_{mx}k_{ny} - k_{my}k_{nx}$ is a scalar; denoting its common value by
$\kappa$:
\begin{equation}
\vect{k}_m \times \vect{k}_n = \kappa
\quad \text{for all cyclic } (j,m,n).
\label{eq:cross_product_identity}
\end{equation}
It follows that
\begin{align}
\sum_j C_j
&= \kappa\!\left[\left(\tfrac{1}{|\vect{k}_2|^2}
   - \tfrac{1}{|\vect{k}_3|^2}\right)
 + \left(\tfrac{1}{|\vect{k}_3|^2}
   - \tfrac{1}{|\vect{k}_1|^2}\right)\right.\nonumber\\
&\qquad\left.
 + \left(\tfrac{1}{|\vect{k}_1|^2}
   - \tfrac{1}{|\vect{k}_2|^2}\right)\right] = 0.
\label{eq:sum_cj_zero}
\end{align}
The cancellation is telescoping (each wavenumber enters the sum twice
with opposite signs) and follows directly from the Poisson inversion
required by incompressibility; the result holds for any resonant triad.

The energy conservation condition $\sum_j C_j/|\vect{k}_j|^2 = 0$ follows
by an identical telescoping argument:
\begin{align}
\sum_j \frac{C_j}{|\vect{k}_j|^2}
&= \kappa\!\left[
   \frac{1}{|\vect{k}_1|^2}\!\left(\frac{1}{|\vect{k}_2|^2}
   - \frac{1}{|\vect{k}_3|^2}\right)\right.\nonumber\\
&\qquad
 + \frac{1}{|\vect{k}_2|^2}\!\left(\frac{1}{|\vect{k}_3|^2}
   - \frac{1}{|\vect{k}_1|^2}\right)\nonumber\\
&\qquad\left.
 + \frac{1}{|\vect{k}_3|^2}\!\left(\frac{1}{|\vect{k}_1|^2}
   - \frac{1}{|\vect{k}_2|^2}\right)\right] = 0.
\label{eq:sum_cj_energy_zero}
\end{align}
Each parenthesised factor appears with opposite sign in the remaining
terms, producing exact cancellation. Physically, this expresses
conservation of modal kinetic energy
$E = \tfrac{1}{2}\sum_j a_j^2/|\vect{k}_j|^2$ by the nonlinear terms of
the incompressible Euler equations. Setting $\nu = 0$
gives~(\ref{eq:euler_triad}).

\subsection{Hasegawa--Mima equation}
\label{app:hm}

The Hasegawa--Mima equation~\cite{Hasegawa1978} governs electrostatic
drift-wave turbulence in magnetized plasmas:
\begin{equation}
\frac{\partial}{\partial t}(\nabla^2\phi - \phi) + [\phi,\, \nabla^2\phi]
= \mu\,\nabla^2(\nabla^2\phi - \phi),
\label{eq:hm}
\end{equation}
where $\phi$ is the electrostatic potential, $[\cdot,\cdot]$ denotes the
2D Poisson bracket, and $\mu$ is a collisional damping rate. The term
$-\phi$ arises from the ion polarization drift. The diamagnetic drift term
$v_*\partial_y\phi$ (where $v_*$ is the diamagnetic drift velocity) is
omitted here as it contributes only linear dispersion without altering the
triad interaction structure or the conserved quantities. As with the
Navier--Stokes case, the equation is posed on the doubly-periodic square
$\vect{x} \in [0, 2\pi]^2$ with periodic boundary conditions in both
spatial directions, admitting a Fourier expansion
$\phi(\vect{x},t) = \sum_{\vect{k}} \hat{\phi}_{\vect{k}}(t)\,
e^{\ii\vect{k}\cdot\vect{x}}$ over integer wavevectors $\vect{k} \in
\mathbb{Z}^2$.

Defining the modified vorticity $q = (\nabla^2 - 1)\phi$ and taking the
Fourier transform gives $\hat{q}_{\vect{k}} = -(|\vect{k}|^2+1)\,
\hat{\phi}_{\vect{k}}$, so the Poisson inversion reads
\begin{equation}
\hat{\phi}_{\vect{k}} =
-\frac{\hat{q}_{\vect{k}}}{|\vect{k}|^2+1},
\label{eq:hm_inversion}
\end{equation}
replacing the $1/|\vect{k}|^2$ of the NS case by
$-(|\vect{k}|^2+1)^{-1}$. Writing the modified vorticity in the resonant
triad as $q(\vect{x},t) = \sum_{j=1}^{3} b_j(t)\,
e^{\ii\vect{k}_j\cdot\vect{x}} + \mathrm{c.c.}$, the modal amplitudes
$b_j$ are the Fourier coefficients of $q$, and the corresponding
potential amplitudes follow from~(\ref{eq:hm_inversion}) as
$\phi_j = -b_j/(|\vect{k}_j|^2 + 1)$. Projecting the Poisson
bracket~(\ref{eq:hm}) onto each wavevector gives~(\ref{eq:hm_triad})
with coupling coefficient
\begin{align}
C_j^{\mathrm{HM}} = (\vect{k}_m \times \vect{k}_n)
\left(\frac{1}{|\vect{k}_m|^2 + 1}
- \frac{1}{|\vect{k}_n|^2 + 1}\right)\!.
\label{eq:hm_coupling}
\end{align}
This differs from the NS coupling~(\ref{eq:ns_coupling}) only by the
replacement $|\vect{k}|^{-2} \to (|\vect{k}|^2+1)^{-1}$. The difference
structure is identical, so the telescoping identity~(\ref{eq:sum_cj_zero})
carries over, giving $\sum_j C_j^{\mathrm{HM}} = 0$. The energy
conservation condition $\sum_j C_j^{\mathrm{HM}}/(|\vect{k}_j|^2+1) = 0$
follows from the same argument with $|\vect{k}|^{-2}$ replaced throughout.
In the inviscid limit ($\mu = 0$), the equation therefore conserves both
$Z = \tfrac{1}{2}\sum_j b_j^2$ and the energy
$E = \tfrac{1}{2}\sum_j b_j^2/(|\vect{k}_j|^2 + 1)$.

\bibliography{kvn}

\end{document}